\documentclass[lettersize,journal]{IEEEtran}
\usepackage{amsmath,amsfonts,amssymb,graphicx, amsthm}

\usepackage{tmi}

\usepackage[export]{adjustbox}



\makeatletter
\newcommand{\bigcomp}{%
  \DOTSB
  \mathop{\vphantom{\sum}\mathpalette\bigcomp@\relax}%
  \slimits@
}
\newcommand{\bigcomp@}[2]{%
  \begingroup\m@th
  \sbox\z@{$#1\sum$}%
  \setlength{\unitlength}{0.9\dimexpr\ht\z@+\dp\z@}%
  \vcenter{\hbox{%
    \begin{picture}(1,1)
    \bigcomp@linethickness{#1}
    \put(0.5,0.5){\circle{1}}
    \end{picture}%
  }}%
  \endgroup
}
\newcommand{\bigcomp@linethickness}[1]{%
  \linethickness{%
      \ifx#1\displaystyle 2\fontdimen8\textfont\else
      \ifx#1\textstyle 1.65\fontdimen8\textfont\else
      \ifx#1\scriptstyle 1.65\fontdimen8\scriptfont\else
      1.65\fontdimen8\scriptscriptfont\fi\fi\fi 3
  }%
}
\makeatother







\def\eqref#1{(\ref{#1})}









\def\1{\bm{1}}










\DeclareMathAlphabet{\mathsfit}{\encodingdefault}{\sfdefault}{m}{sl}
\SetMathAlphabet{\mathsfit}{bold}{\encodingdefault}{\sfdefault}{bx}{n}











\newcommand{\E}{\mathbb{E}}

\newcommand{\R}{\mathbb{R}}



\DeclareMathOperator*{\argmin}{arg\,min}

\usepackage{booktabs}
\usepackage{wrapfig}
\usepackage{array}
\usepackage{tablefootnote}
\usepackage{subcaption}
\usepackage{pifont}
\usepackage{colortbl}
\usepackage{hyperref} 
\usepackage{bm}
\usepackage{xspace}
\usepackage{multirow}
\usepackage{float}

\DeclareMathAlphabet\mathbfcal{OMS}{cmsy}{b}{n}
\usepackage{tcolorbox}

\newcommand{\modl}{\textsc{MoDL}}
\newcommand{\us}{\textsc{SMUG}}

\newcommand{\x}{\mathbf{x}}
\newcommand{\w}{\mathbf{w}}
\newcommand{\y}{\mathbf{y}}

\newcommand{\A}{\mathbf{A}}

\newcommand{\mycomment}[1]{}
\newtheorem{lemma}{Lemma}

\usepackage{cite}

\def\BibTeX{{\rm B\kern-.05em{\sc i\kern-.025em b}\kern-.08em
    T\kern-.1667em\lower.7ex\hbox{E}\kern-.125emX}}
\begin{document}
\title{Robust MRI Reconstruction by Smoothed Unrolling (SMUG)}
%

\author{Shijun Liang*,~\IEEEmembership{Member, ~IEEE},~Van Hoang Minh Nguyen*, Jinghan Jia, ~\IEEEmembership{Student Member, ~IEEE},
Ismail Alkhouri, \IEEEmembership{Member,~IEEE},
Sijia~Liu,~\IEEEmembership{Senior~Member,~IEEE}, Saiprasad~Ravishankar,~\IEEEmembership{Senior~Member,~IEEE}
\thanks{*~Equal contribution.
S. Liang (corresponding author: \textit{liangs16@msu.edu}) is with the Biomedical Engineering (BME) Department at Michigan State University (MSU), East Lansing, MI, 48824, USA. M. Nguyen (\textit{nguye954@msu.edu}) is with the Mathematics Department at MSU.  J.Jia (\textit{jiajingh@msu.edu}) is with the Computer Science and Engineering (CSE) Department at MSU. I. Alkhouri (\textit{alkhour3@msu.edu} \& \textit{ismailal@umich.edu}) is with the Computational Mathematics, Science \& Engineering (CMSE) Department at MSU and the Electrical Engineering \& Computer Science Department at the University of Michigan, Ann Arbor, MI, 48109, USA. S. Liu (\textit{liusiji5@msu.edu}) is with the CSE Department at MSU.  S. Ravishankar (\textit{ravisha3@msu.edu}) is with the CMSE \& BME Departments at MSU.\\
\copyright~2025 IEEE. Personal use of this material is permitted. Permission
from IEEE must be obtained for all other uses, in any current or future media,
including reprinting/republishing this material for advertising or promotional
purposes, creating new collective works, for resale or redistribution to servers
or lists, or reuse of any copyrighted component of this work in other works.
} \vspace{-0.3in}}


%
%
%
%
\maketitle
%



\newcommand{\Def}[0]{\mathrel{\mathop:}=}

\newcommand{\bx}{\mathbf{x}}
\newcommand{\by}{\mathbf{y}}
\newcommand{\bz}{\mathbf{z}}
\newcommand{\bX}{\mathbf{X}}
\newcommand{\bh}{\mathbf{h}}
\newcommand{\bu}{\mathbf{u}}
\newcommand{\bg}{\mathbf{g}}
\newcommand{\din}{\mathcal D^{\texttt{tr}}}
\newcommand{\dint}{\tilde{\mathcal D}^{\texttt{tr}}}
\newcommand{\dout}{\mathcal D^{\texttt{val}}}
\newcommand{\doutt}{\tilde{\mathcal D}^{\texttt{val}}}
\newcommand{\btheta}{\boldsymbol{\theta}}
\newcommand{\bphi}{\boldsymbol{\phi}}
\newcommand{\bdelta}{\boldsymbol{\delta}}
\newcommand{\grad}{\nabla}
\newcommand{\egrad}{\widehat{\nabla}}
\newcommand{\task}{\mathcal{T}}
\newcommand{\rnn}{\text{{\tt RNN}}\xspace}
\newcommand{\EI}{\text{EI}}
\newcommand{\lopt}{\text{LO}}
\newcommand{\V}{\mathbb{V}}
\newcommand{\indep}{\perp \!\!\! \perp}
\newcommand{\pphi}[1]{\frac{\partial #1}{\partial \bphi}}

\newcommand{\pre}{\textsc{p}}
\newcommand{\ft}{\textsc{f}}
\newcommand{\adv}{\textsc{a}}
\newcommand{\sta}{\textsc{n}}
\newcommand{\super}{\textsc{s}}
\newcommand{\self}{\textsc{ss}}
\newcommand{\Sp}{\textit{supervised}}
\newcommand{\Ss}{\textit{self-supervised}}

\newcommand{\cL}{\mathcal{L}}
\newcommand{\cD}{\mathcal{D}}

\begin{abstract}

As the popularity of deep learning (DL) in the field of magnetic resonance imaging (MRI) continues to rise, recent research has indicated that DL-based MRI reconstruction models might be excessively sensitive to minor input disturbances, including worst-case or random additive perturbations. This sensitivity often leads to unstable aliased images. This raises the question of how to devise DL techniques for MRI reconstruction that can be robust to \textcolor{black}{these variations}.
To address this problem, we propose a novel image reconstruction framework, termed \textsc{\underline{Sm}oothed \underline{U}nrollin\underline{g}} ({\us}), which advances a deep unrolling-based MRI reconstruction model using a randomized smoothing (RS)-based robust learning approach. RS, which improves the tolerance of a model against input noise, has been widely used in the design of adversarial defense approaches for image classification tasks. Yet, we find that the conventional design that applies RS to the entire DL-based MRI model is ineffective. In this paper, we show that {\us} and its variants address the above issue by customizing the RS process based on the unrolling architecture of DL-based MRI reconstruction models. \textcolor{black}{We theoretically analyze the robustness of our method in the presence of perturbations.}
Compared to vanilla RS \textcolor{black}{and other recent approaches},
we show that {\us} improves the robustness of MRI reconstruction with respect to a diverse set of instability sources, including worst-case and random noise perturbations to input measurements, varying measurement sampling rates, and different numbers of unrolling steps. 
Our code is available at \texttt{\url{https://github.com/sjames40/SMUG_journal}}.

\end{abstract}
\begin{IEEEkeywords}
Magnetic resonance imaging, machine learning, deep unrolling, robustness, randomized smoothing, compressed sensing.
\end{IEEEkeywords}
\section{Introduction}
\label{sec:intro}

Magnetic resonance imaging (MRI) is a popular noninvasive imaging modality, which involves a sequential and slow data collection.
As such, MRI scans can be accelerated by collecting
limited
data. In this case, the process of image reconstruction requires tackling an ill-posed inverse problem. To deliver accurate image reconstructions from such limited information, compressed sensing (CS)~\cite{lustig2008compressed} has been extensively used. 
Conventional CS-MRI assumes
the underlying image's sparsity (in practice, enforced in the wavelet domain~\cite{wave} or via total variation~\cite{totalv}). 
As further improvement to conventional CS, various learned sparse signal models have been well-studied, such as involving patch-based synthesis dictionaries~\cite{ravishankar2011dlmri,jacob2013blindCSMRI}, or sparsifying transforms~\cite{ravishankar2012learning,ravishankar2020review}.
Learned transforms have been shown to offer an efficient and effective framework for sparse modeling in MRI~\cite{wensaibres20}.\par

\textcolor{black}{Recently}, due to the outstanding representation power of convolutional neural networks (CNNs), they have been applied in single-modality medical imaging synthesis \textcolor{black}{and reconstruction} ~\cite{Schlemper2019Sigma-net:Reconstruction,Ravishankar2018DeepReconstruction,aggarwal2018modl,Schlemper2018AReconstruction}. The U-Net \textcolor{black}{architecture}, presented in \cite{Unet} and used in several studies, is a
popular deep CNN for many tasks involving image processing. They exhibit two key features: the use of a diminishing path for gathering contextual information, and a symmetric expansion path for precise localization.\par

Hybrid-domain DL-based image reconstruction methods, such as \underline{Mo}del-based reconstruction using \underline{D}eep \underline{L}earned priors
({\modl})~\cite{aggarwal2018modl}, \textcolor{black}{Iterative Shrinkage-Thresholding Algorithm (ISTA-Net)~\cite{zhang2018istanet}}, etc., are used to enhance stability and performance by ensuring data consistency 
in the training and reconstruction phases. In MR imaging, data consistency layers are often essential in reconstruction networks 
to ensure the image agrees with the measurement model~\cite{Zheng2019twodataconsist,casade2017deep}.
Various methods such as~\cite{sun2016deep,hammernik2018learning,zhang2018istanet,aggarwal2018modl} maintain this consistency by deep unrolling-based architectures, which mimic a traditional iterative algorithm and learn the associated regularization parameters. Other approaches ensure data consistency by applying methods such as denoising regularization~\cite{romanoRED17} and plug-and-play techniques~\cite{buzzard:18:pap}. Despite their recent advancements, DL-based MRI reconstruction models are shown to be vulnerable to tiny changes or noise in the input, shifts in the measurement sampling rate~\cite{antun2020instabilities, zhang2021instabilities}, and varying iteration numbers in unrolling schemes~\cite{gilton2021deep}. In such cases, the resulting images from DL models are of inferior quality which could possibly lead to inaccurate diagnoses and, consequently, undesirable clinical consequences.


It is of \textcolor{black}{much} importance in medical imaging applications to learn reconstruction models that are robust to various \textcolor{black}{measurement artifacts, noise, and} scan or data variations at test time.
Although there exist numerous robustification techniques ~\cite{madry2017towards,zhang2019theoretically,cohen2019certified,salman2020denoised} to tackle the instability of DL models in image classification tasks, methods to enhance the robustness of DL-based MRI reconstruction models are less explored due to their regression-based learning targets. 
Methods such as randomized smoothing (RS) and its variations~\cite{cohen2019certified, salman2020denoised,zhang2022robustify}, are often used in image classification. They diverge from traditional defense methods~\cite{madry2017towards,zhang2019theoretically} such as adversarial training, which provide some empirical robustness but are computationally expensive and could fail under more 
diverse perturbations. 
RS ensures the model's stability within a 
radius surrounding the input image~\cite{cohen2019certified}, which could be critical for medical use cases such as MRI. Recent early-stage research has begun to apply RS to DL-based MRI reconstruction in an end-to-end manner~\cite{wolfmaking}. However, the end-to-end RS approach might not always be an appropriate fit for all image reconstructors, such as physics-based and hybrid methods.



In our recent conference work~\cite{li2023smug}, we proposed integrating the RS approach within the {\modl} framework for the problem of MR image reconstruction. 
This is accomplished by using RS in each unrolling step and at the intermediate unrolled denoisers in {\modl}. This strategy is underpinned by the `pre-training + fine-tuning' technique \cite{zoph2020rethinking,salman2020denoised}. 
\textcolor{black}{This paper significantly expands over our conference work~\cite{li2023smug}, with added analysis, extension to multiple reconstruction models, and comprehensive experimental comparisons and ablation studies. 
We 
provide an analysis and conditions under which the proposed smoothed unrolling (SMUG) technique is robust against perturbations.
The analysis sheds light on robustness to additive perturbations and with respect to increasing unrolling steps in the reconstruction model.
Our work is the first to systematically integrate robustness operations into physics-based image reconstruction networks and provide both analysis and comprehensive empirical studies.}
Furthermore, we introduce a novel weighted smoothed unrolling scheme that learns image-wise weights during smoothing unlike conventional RS. This approach further improves the reconstruction performance.  \textcolor{black}{Furthermore, in this work, we evaluate worst-case additive perturbations in k-space or measurement space, in contrast to \cite{li2023smug}, where image-space perturbations were considered.}

\vspace{-0.1in}
\subsection{Contributions}
The main contributions of this work are as follows:
\begin{itemize}
    \item \textcolor{black}{We propose {\us} that  systematically integrates robustness operations (RS) into several physics-based unrolled image reconstruction networks.}
    \item  \textcolor{black}{We provide a theoretical analysis to demonstrate the robustness of {\us} for image reconstruction using the MoDL architecture.} 
    \item We enhance the performance of {\us} by introducing weighted smoothing as an improvement over conventional RS and 
    showcase the resulting gains.
    \item \textcolor{black}{We integrate the techniques into multiple unrolled models including {\modl}~\cite{aggarwal2018modl}, ISTA-Net~\cite{zhang2018istanet}, and E2E-VarNet~\cite{sriram2020end}
    and demonstrate improved robustness of our methods compared to the original schemes. We also show advantages for {\us} over end-to-end RS \cite{wolfmaking}, Adversarial Training (AT)~\cite{jia2022robustness}, Deep Equilibrium (Deep-Eq) models~\cite{Deep-eq}, Hierarchical Randomized Smoothing ~\cite{scholten2024hierarchical} and a leading diffusion-based model~\cite{chung2022score}. Extensive experiments demonstrate the potential of our approach in handling various types of reconstruction instabilities.}
    
\end{itemize}

\vspace{-0.1in}
\subsection{Paper Organization}
The remainder of the paper is organized as follows. In Section \ref{sec: preliminaries}, we present preliminaries and the problem statement. Our proposed method is described in Section \ref{method}. Section \ref{sec: experiment} presents experimental results and comparisons, and we conclude in Section \ref{sec:conclusion}. 

\section{Preliminaries and Problem Statement}
\label{sec: preliminaries}


\subsection{\textcolor{black}{Setup of MRI Reconstruction}}

Many medical imaging approaches involve
ill-posed inverse problems such as the work in~\cite{compress}, where the aim is to reconstruct the original signal $\mathbf x \in\mathbb{C}^{n}$ (vectorized image) from undersampled k-space 
measurements $\mathbf y \in  \mathbb{C}^m$ with $m < n$.  \textcolor{black}{Here, k-space~\cite{lin2000principles} refers to the measurement space in MRI, and is the spatial frequency domain of the acquired signal. In multi-coil MRI, different coils encode the signal differently according to their spatial sensitivity profiles.} The imaging system in MRI can be modeled as a linear system $\mathbf y \approx  \mathbf A \mathbf x $, {where $\mathbf A$ may take on different forms for single-coil or parallel (multi-coil) MRI, etc.}
For example, in the single coil Cartesian MRI acquisition setting, $\mathbf A =  \mathbf M \mathbf F$, where $\mathbf F$ is the 2D discrete Fourier transform and $\mathbf M$ is a masking operator that implements undersampling. 
With the linear observation model, 
 MRI reconstruction is often formulated as
\begin{align}
    \hat{\bx}=\underset{\bx}{\arg\min} ~ \|\mathbf A \bx - \by \|^{2}_2 + \lambda \mathcal{R}(\bx),
    \label{eq:inv_pro}
\end{align}
where $\mathcal{R}(\cdot)$ is a regularization function (\textit{e.g.}, $\ell_1$ norm in the wavelet domain to impose a sparsity prior \textcolor{black}{\cite{wave}}), and $\lambda > 0$ is the regularization parameter.  






{\modl} \cite{Aggarwal2019MoDL:Problems} is a recent popular supervised deep learning approach inspired by the MR image reconstruction optimization problem in \eqref{eq:inv_pro}. {\modl} combines a denoising network with a data-consistency (DC) module in each iteration of an unrolled architecture.
In {\modl}, the hand-crafted regularizer, $\mathcal R$, is replaced by a learned network-based prior $\left \| \mathbf{x} - \cD_{\boldsymbol \theta}(\mathbf{x}) \right \|_{2}^2$ involving a  network $\cD_{\boldsymbol \theta}$. 
MoDL attempts to optimize this loss by initializing $\bx^0 = \mathbf{A}^H \mathbf{y}$, and then iterating the following process for a number of unrolling steps indexed by $n\in \{0,\dots,N-1\}$. Specifically, {\modl} iterations are given by 
%
%
\begin{equation}
    \label{modleqn1}
    \x^{n+1} = 
     \argmin_{\x} \|\A \x - \y\|_2^2
    + \lambda \|\x- \mathcal D_{\boldsymbol{\theta}}(\x^n)\|_2^2.
\end{equation}
\textcolor{black}{Here, the \textbf{Denoising Step} is given by
\(\boldsymbol{z^n} = \mathcal 
 D_{\boldsymbol{\theta}}(\x^n)\)} 
 \textcolor{black}{and the \textbf{Data Consistency (DC) Step} is given by
\[
\x^{n+1} = \argmin_{\boldsymbol{z}} \|\A\x - \y\|^2 + \lambda \|\x - \boldsymbol{z^{n}}\|^2.
\]
The DC step has a closed-form solution given by
\[
\x^{n+1} = (\A^H \A + \lambda \boldsymbol{I})^{-1} \big(\A^H \y + \lambda \boldsymbol{z}^n\big).
\]
The solution is implemented using conjugate gradients (CG).}
After $N$ iterations, we denote the final output of {\modl} as $\x^N = \boldsymbol{F}_{\text{MoDL}}(\x^0)$. The weights of the denoiser are shared across the $N$ blocks and are learned in an end-to-end supervised manner \cite{aggarwal2018modl}.

\subsection{\textcolor{black}{Lack of Robustness of DL-based Reconstructors}}

In \cite{antun2020instabilities}, it was demonstrated that deep learning-based MRI reconstruction can exhibit instability, when faced with subtle, nearly imperceptible input perturbations. These perturbations are commonly referred to as `adversarial perturbations' and have been extensively investigated in the context of DL-based image classification tasks, as outlined in~\cite{Goodfellow2015explaining}. In the context of MRI, these perturbations represent the worst-case additive perturbations, which \textcolor{black}{can be} used to evaluate method sensitivity and robustness~\cite{antun2020instabilities,jia2022on,10447906}. Let $\boldsymbol \delta$ denote a small perturbation of the measurements that falls in an $\ell_\infty$ ball of radius $\epsilon$, \textit{i.e.}, $\| \boldsymbol \delta \|_\infty \leq \epsilon$. Adversarial disturbances then correspond to the worst-case input perturbation vector $\boldsymbol \delta$ that maximizes the reconstruction error, \textit{i.e.}, 
\begin{equation}
\begin{array}{ll}
\max_{\| \bdelta \|_\infty \leq \epsilon }       \|  
\boldsymbol{F}_\text{{\modl}} (\mathbf A^H (\mathbf y + \boldsymbol \delta)) -  \mathbf t \|_2^2,
\end{array}
  \label{eq: atk_perturb}
\end{equation}
where $\mathbf t$ is a ground truth target image from the training set (\textit{i.e.}, label). The operator $\mathbf A^H $  transforms the measurements $\mathbf y$ to the image domain, and $\mathbf A^H \mathbf y$ is the input (aliased) image to the reconstruction model.
The optimization problem in \eqref{eq: atk_perturb} can be effectively solved using the iterative projected gradient descent (PGD) method~\cite{madry2017towards}. 

In \textbf{Fig.~\ref{fig: weakness}-(a)} and \textbf{(b)}, we show reconstructed images using {\modl} originating from a benign (i.e., undisturbed) input and a PGD-perturbed input, respectively. It is evident that the worst-case input disturbance significantly deteriorates the quality of the reconstructed image. While 
one focus of this work is to enhance robustness against input perturbations, \textbf{Fig.\ref{fig: weakness}-(c)} and \textbf{(d)} highlight two additional potential sources of instability that the reconstructor ({\modl}) can encounter during testing: variations in the measurement sampling rate (resulting in ``perturbations'' to the sparsity of the sampling mask in $\mathbf A$) \cite{antun2020instabilities}, and changes in the number of unrolling steps \cite{gilton2021deep}. In scenarios where the sampling mask (\textbf{Fig.\ref{fig: weakness}-(c)}) or number of unrolling steps (\textbf{Fig.\ref{fig: weakness}-(d)}) deviate from the settings used during {\modl} training, we observe a significant degradation in performance compared to the original setup (\textbf{Fig.\ref{fig: weakness}-(a)}), even in the absence of additive measurement perturbations. In Section~\ref{sec: experiment}, 
we demonstrate how our method
improves the reconstruction robustness in the presence of different types of perturbations, including those in \textbf{Fig.\ref{fig: weakness}}.

\begin{figure}[t]
\centering
    \includegraphics[width=9cm]{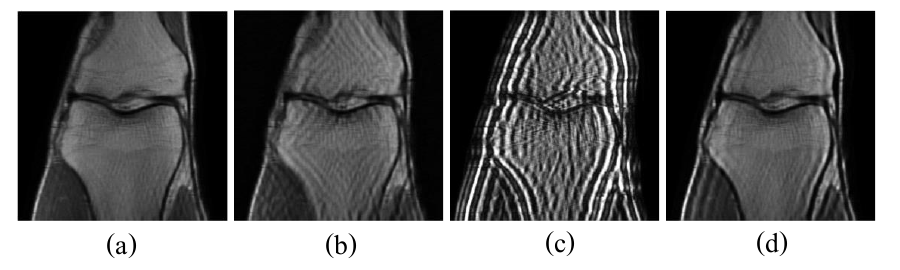}
\vspace{-0.3cm}
\caption{{\modl}'s instabilities resulting from perturbations to input data, the measurement sampling rate, and the number of unrolling steps used at testing phase shown on an image from the \texttt{fastMRI} 
dataset~\cite{zbontar2018fastmri}. We refer readers to Section~\ref{sec: experiment} for further details about the experimental settings. (a) {\modl} reconstruction from benign (\textit{i.e.}, without additional noise/perturbation) measurements with $ 4\times$ acceleration (\textit{i.e.}, 25\% sampling rate) and 8 unrolling steps. (b) {\modl} reconstruction from disturbed input with perturbation strength $\epsilon = 0.02$ (see Section~\ref{sec:experiment:setup}). (c) {\modl} reconstruction from clean measurements with $ 2\times$ acceleration (\textit{i.e.}, 50\% sampling), and using 8 unrolling steps. (d) {\modl} reconstruction from clean or unperturbed measurements with $ 4\times$ acceleration and \emph{16} unrolling steps.
\textcolor{black}{In (b), (c), and (d), the network trained in (a) is used.}
}
\label{fig: weakness}
\vspace{-0.5cm}
\end{figure}

\subsection{\textcolor{black}{Randomized Smoothing (RS)}}

Randomized smoothing, introduced in \cite{cohen2019certified}, enhances the robustness of DL models against noisy inputs.
It is implemented by generating multiple randomly modified versions of the input data and subsequently calculating an averaged output from this diverse set of inputs. 

Given some function $f(\mathbf x)$, RS formally replaces $f$ with a smoothed version
\begin{equation}
    \label{eqn: RS}
    g(\mathbf x) \Def \mathbb{E}_{\boldsymbol \eta \sim\mathcal{N}(\mathbf 0, \sigma^2\mathbf{I})} [ f( \bx + \boldsymbol \eta) ]\:,
\end{equation}
where $\mathcal{N}(\mathbf 0, \sigma^2\mathbf{I})$ denotes a Gaussian distribution with zero mean and element-wise variance $\sigma^2$, and $\mathbf{I}$ denotes the identity matrix of appropriate size. Prior research has shown that RS has been effective as an adversarial defense approach in DL-based image classification tasks \cite{cohen2019certified, salman2020denoised, zhang2022how}. However, the question of whether RS can significantly improve the robustness of {\modl} and other image reconstructors has not been thoroughly explored. A preliminary investigation in this area was conducted by \cite{wolfmaking}, which demonstrated the integration of RS into MR image reconstruction in an end-to-end (E2E) setting.
We can formulate image reconstruction using RS-E2E as
\begin{align}
    \x_{\textsc{RS-E2E}}=\mathbb{E}_{\boldsymbol \eta \sim\mathcal{N}(\mathbf 0, \sigma^2\mathbf{I})} [ 
    \boldsymbol{F}_{\text{\modl}} (\mathbf A^H (\mathbf{y} + \boldsymbol \eta)) ]
    .
    \label{eq: denoised smoothing mri}
    \tag{RS-E2E}
\end{align}
\textcolor{black}{This formulation aligns with the one used in~\cite{wolfmaking}, where the random noise vector $\boldsymbol \eta$ is directly added to $\mathbf{y}$ in the frequency domain (complex-valued), followed by multiplication with $\mathbf A^H$ to obtain the input image for \modl. The noisy measurements are also utilized in each iteration in \modl. RS-E2E can be identically formulated for alternative reconstruction models.}





\textbf{Fig.\,\ref{fig: combined}} shows a block diagram of \ref{eq: denoised smoothing mri}-backed {\modl}. 
This RS-integrated {\modl} is trained with supervision in the standard manner.
Although \eqref{eq: denoised smoothing mri} represents a straightforward application of RS to {\modl}, it remains unclear if this formulation is the most effective method to incorporate RS into unrolled algorithms such as {\modl}, considering the latter's 
specialties, e.g., the involved denoising and the data-consistency (DC) steps.\par

As such, for the rest of the paper, we focus on studying the following questions \textbf{(Q1)}--\textbf{(Q4)}. 
 \begin{tcolorbox}[left=1.2pt,right=1.2pt,top=1.2pt,bottom=1.2pt]
\textbf{(Q1)}: \textit{How should RS be integrated into an unrolled algorithm such as {\modl}?} \\
\textbf{(Q2)}: \textit{How do we learn the network $\mathcal{D}_{\boldsymbol{\theta}}(\cdot)$ in the presence of RS operations?} \\
\textbf{(Q3)}: \textit{Can we prove the robustness of SMUG in the presence of data perturbations?} \\
\textbf{(Q4)}: \textit{Can we further improve the RS operation in SMUG for enhanced image quality or sharpness?}
\end{tcolorbox}


\newtheorem{theorem}{Theorem}

\section{Methodology}
\label{method}
In this section, we address questions \textbf{(Q1)}--\textbf{(Q4)} by taking the unrolling characteristics of {\modl} into the design of an RS-based MRI reconstruction. The proposed novel integration of RS with {\modl} is termed 
{\textsc{\underline{Sm}oothed \underline{U}nrollin\underline{g}}}
(\textbf{\us}). \textcolor{black}{We also explore an extension of {\us} with a new weighted smoothing that yields improved performance.} \textcolor{black}{We note that while we primarily develop our methods based on {\modl}, in Section~\ref{sec: SMUG into other unrolled NWs} and Section~\ref{istanetresults}, we discuss extension to other unrolling methods such as ISTA-Net and E2E-VarNet.}



\subsection{Solution to (\textbf{Q1}): RS at intermediate unrolled denoisers}
\label{sec: unrolling}
As illustrated in \textbf{Fig.\ref{fig: combined}} (top), the RS operation in RS-E2E is typically applied to {\modl} in an end-to-end manner. 
This does not shed light on which component of {\modl} needs to be made more robust.
Here, we explore integrating RS at each intermediate unrolling step of {\modl}.
In this subsection, we present {\us}, which applies RS to the denoising network. 
This seemingly simple modification is related to a robustness certification technique known as ``denoised smoothing'' \cite{salman2020denoised}. In this technique, a smoothed denoiser is used, proving to be sufficient for establishing robustness in the model. We use $ \x_{\textrm{S}}^{n}$ to \textcolor{black}{denote} the $n$-th iteration of {\us}. Starting from $\x_{\textrm{S}}^{0} = \A^H \y$, the procedure is given by 
\begin{equation}
    \x_{\textrm{S}}^{n+1} = \argmin_{\x} \|\A \x - \y\|_2^2+
     \lambda \|\x- \mathbb{E}_{\boldsymbol{\eta}}\big[ \mathcal{D}_{\boldsymbol{\theta}}(\x_{\textrm{S}}^n +\boldsymbol{\eta}) \big] \|_2^2\:,
\end{equation}
%
where $\boldsymbol{\eta}$ is drawn from $\mathcal{N}(\mathbf 0, \sigma^2\mathbf{I})$. \textcolor{black}{After $N$ iterations, the final output of SMUG is denoted by $\x^N_{\text{S}} = \boldsymbol{F}_{\text{SMUG}}(\x^0)$, where $\boldsymbol{F}_{\text{SMUG}}(\cdot)$ denotes the end-to-end mapping.}
\textcolor{black}{The middle row of \textbf{Fig.\,\ref{fig: combined}} presents the architecture of {\us}.}

\begin{figure*}[t]
    \centering
    \makebox[\textwidth][c]{\includegraphics[width=0.78\textwidth]{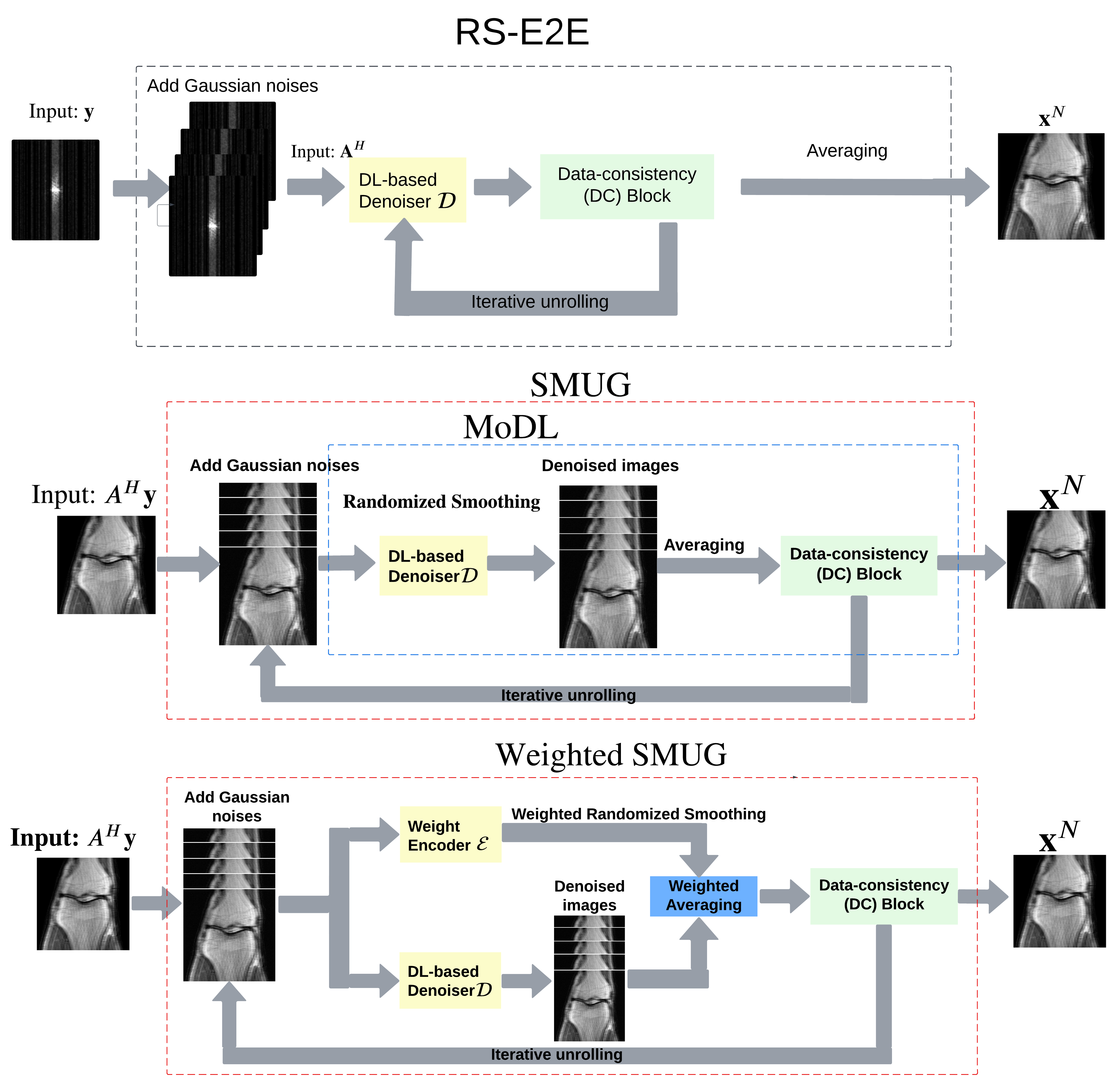}}
    \vspace*{-2em}
    \caption{\textcolor{black}{The three randomized smoothing-based architectures for reconstruction. In \textbf{RS‐E2E}, we generate \(N\) noisy k-space versions by adding Gaussian noise to \(\mathbf{y}\) and then apply the Hermitian operator \(\mathbf{A}^H\) to obtain samples that are batch-processed by a neural network for initial denoising. These outputs are refined by a data consistency module using the closed-form update~\eqref{modleqn1}, and after a few unrolled iterations, the final reconstruction is obtained by averaging the outputs. In contrast, the \textbf{{\us}} architecture directly adds random Gaussian noise in the image domain to create multiple noisy versions that are denoised by the neural network; their averaged output serves as a randomized smoothing step before applying the same data consistency module, yielding the final smoothed result after several iterations. Extending this framework, \textbf{Weighted {\us}} employs a learned weighted averaging obtained from a weighted encoder applied prior to the data consistency step—to produce the final smoothed reconstruction after a few unrolled iterations.}}
    \label{fig: combined}
    \vspace*{-4mm}
\end{figure*}

\subsection{Solution to \textbf{(Q2)}: {\us}'s pre-training \& fine-tuning}
\label{sec: how to smoothing}
In this subsection, we develop the training scheme of {\us}. Inspired by the currently celebrated ``pre-training + fine-tuning'' technique \cite{zoph2020rethinking,salman2020denoised}, we propose to train {\us} following this learning paradigm. Our rationale is that pre-training can provide a robustness-aware initialization of the DL-based denoising network for fine-tuning. To pre-train the denoising network $\mathcal D_{\btheta}$, we consider a mean squared error (MSE) loss that measures the Euclidean distance between images denoised by $\mathcal D_{\btheta}$ and 
the target (\textcolor{black}{ground truth}) images, denoted by $\mathbf t$. This leads to the \textbf{pre-training} step:
\begin{equation}
   \btheta_\mathrm{pre} = \displaystyle \argmin_{\btheta} \mathbb{E}_{\mathbf t \in \mathcal T} [ \mathbb{E}_{\boldsymbol \eta }  || \mathcal D_{\boldsymbol \theta} (  \mathbf t + \boldsymbol \eta)  -  \mathbf t ||_2^2]\:,
\label{eq: pre-train_loss}
\end{equation}
where $\mathcal T $ is the set of ground truth images in the training dataset. Next, we develop the fine-tuning scheme to improve $\btheta_\mathrm{pre}$ based on the labeled/paired MRI dataset. Since RS in {\us}, i.e., \textbf{Fig.~\ref{fig: combined}} (middle), is applied to every unrolling step, we propose an \textit{unrolled stability (\textbf{UStab}) loss} for fine-tuning 
$\mathcal{D}_{\btheta}$: 
\begin{align}
    \ell_{\mathrm{UStab}}(\btheta;   \mathbf y, \mathbf t)=
    \sum_{n=0}^{N-1} \mathbb{E}_{\boldsymbol \eta }||\cD_{\btheta}( \bx^n+\boldsymbol \eta)-\cD_{\btheta}(\mathbf t)||^2_2\:.
    \label{eq: unrolling loss}
\end{align}
The UStab loss in \eqref{eq: unrolling loss} relies on the  target images.
\textcolor{black}{The regularization exploits the target to better guide the behavior of the denoiser with random noise perturbations in each unrolling iteration to ensure enhanced stability of denoising.
It would appear more intuitive to use 
$\mathbf t$ instead of $D_{\btheta}(\mathbf t)$ inside the loss to directly minimize target estimation error. However, our study in \textbf{Fig.}~\ref{fig: unrolling_loss} using different loss configurations indicate that the former option degrades robustness and using $D_{\btheta}(\bx^n)$ or $D_{\btheta}(\mathbf t)$ in \eqref{eq: unrolling loss} to match to denoised unperturbed inputs or denoised targets yields more stable models.}

Integrating the UStab loss, defined in \eqref{eq: unrolling loss}, with the standard reconstruction loss, 
we obtain the \textbf{fine-tuned} $\btheta$ 
by minimizing
$ \mathbb{E}_{(\mathbf y, \mathbf t)} [ \ell(\btheta; \mathbf y, \mathbf t)]$, where
\begin{equation}
   \displaystyle 
   \ell(\btheta; \mathbf y, \mathbf t)= \ell_{\mathrm{UStab}}(\btheta;   \mathbf y, \mathbf t) + \lambda_\ell
   \| 
   \boldsymbol{F}_{\text{SMUG}}
   (\mathbf A^H \mathbf y) - \mathbf t \|_2^2    
   , 
\label{eq: finetune_loss}
\end{equation}
with $\lambda_\ell > 0$ \textcolor{black}{representing} a regularization parameter to strike a balance between the reconstruction error (for accuracy) and the denoising stability (for robustness) terms. We initialize $\btheta$ as $\btheta_\mathrm{pre}$ when optimizing \eqref{eq: finetune_loss} using standard optimizers such as Adam~\cite{kingma2014adam}.

\textcolor{black}{In practice, the same dataset is used for fine-tuning as pre-training because the pre-trained model is initially trained solely as a denoiser, while the fine-tuning process aims at integrating the entire regularization strategy applied to the MoDL framework. This approach ensures that the fine-tuning optimally adapts the model to the specific enhancements introduced by our robustification strategies.}




\subsection{Answer to \textbf{(Q3)}: Analyzing the robustness of SMUG in the presence of data perturbations}
\label{sec: why smoothing}


The following theorem discusses the robustness (i.e., sensitivity to input perturbations) achieved with {\us}. Note that all norms on vectors (resp. matrices) denote the $\ell_2$ norm (resp. spectral norm) unless indicated otherwise.

\textcolor{black}{
\begin{theorem}
Assume the denoiser network's output is bounded in norm.
Given the initial input image $\mathbf{A}^H \mathbf{y}$ obtained from measurements $\mathbf{y}$, let the SMUG 
reconstructed image at the $n$-th unrolling step be  $\mathbf{x}^n_{\text{S}}(\mathbf{A}^H \mathbf{y})$ 
with RS variance of $\sigma^2$. Let $\boldsymbol{\delta}$ denote an additive perturbation to the measurements $\mathbf{y}$. 
Then, 
\begin{equation}
    \label{eqn: Th 1 tight bound 1}
    \|\mathbf{x}^n_{\text{S}}(\mathbf{A}^H \mathbf{y}) - \mathbf{x}^n_{\text{S}}(\mathbf{A}^H (\mathbf{y}+\boldsymbol{\delta}))\| \leq C_n \|\boldsymbol{\delta}\|,
\end{equation}
where $C_n = \alpha \|\mathbf{A}\|_2    \begin{pmatrix} \frac{1-\begin{pmatrix}\frac{M \alpha}{\sqrt{2 \pi}\sigma}\end{pmatrix}^{n}   }{1 - \frac{M \alpha}{\sqrt{2 \pi}\sigma} } \end{pmatrix} +  \|\mathbf{A}\|_2   \begin{pmatrix} \frac{M \alpha}{\sqrt{2 \pi}\sigma} \end{pmatrix}^{n}$, with $\alpha =  \|(\mathbf{A}^H\mathbf{A}+\mathbf{I})^{-1}\|_2$ and $M = 2\max_\x(\|\mathcal{D}_{\boldsymbol\theta}(\x)\|)$.
%
\end{theorem}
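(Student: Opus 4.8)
The plan is to exploit the closed-form data-consistency update together with the Lipschitz smoothing property of the Gaussian-averaged denoiser. Taking $\lambda = 1$ in the DC step (consistent with $\alpha = \|(\mathbf{A}^H\mathbf{A}+\mathbf{I})^{-1}\|_2$), each SMUG iteration reads $\mathbf{x}^{n+1}_{\text{S}} = (\mathbf{A}^H\mathbf{A}+\mathbf{I})^{-1}\big(\mathbf{A}^H\mathbf{y} + g(\mathbf{x}^n_{\text{S}})\big)$, where $g(\mathbf{x}) \Def \mathbb{E}_{\boldsymbol{\eta}\sim\mathcal{N}(\mathbf{0},\sigma^2\mathbf{I})}[\mathcal{D}_{\boldsymbol{\theta}}(\mathbf{x}+\boldsymbol{\eta})]$ is the smoothed denoiser. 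Writing $\mathbf{x}^n$ for the iterate driven by $\mathbf{y}$ and $\tilde{\mathbf{x}}^n$ for the one driven by the perturbed measurements $\mathbf{y}+\boldsymbol{\delta}$ (so that $\boldsymbol{\delta}$ enters both the initialization and the data term), and setting $\mathbf{e}_n = \mathbf{x}^n - \tilde{\mathbf{x}}^n$, subtracting the two recursions gives $\mathbf{e}_{n+1} = (\mathbf{A}^H\mathbf{A}+\mathbf{I})^{-1}\big(-\mathbf{A}^H\boldsymbol{\delta} + g(\mathbf{x}^n) - g(\tilde{\mathbf{x}}^n)\big)$. Taking norms and using $\|\mathbf{A}^H\|_2 = \|\mathbf{A}\|_2$ yields $\|\mathbf{e}_{n+1}\| \le \alpha\|\mathbf{A}\|_2\|\boldsymbol{\delta}\| + \alpha\,\|g(\mathbf{x}^n) - g(\tilde{\mathbf{x}}^n)\|$.

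The crux is to control the Lipschitz constant of $g$, and I expect this to be the main technical obstacle since it is the only place the boundedness hypothesis and the RS variance $\sigma^2$ are used. The idea is to change variables so that $g(\mathbf{x}) = \int \mathcal{D}_{\boldsymbol{\theta}}(\mathbf{u})\,\phi_\sigma(\mathbf{u}-\mathbf{x})\,d\mathbf{u}$ with $\phi_\sigma$ the Gaussian density; because Gaussian convolution makes $g$ smooth, one may differentiate under the integral to obtain the Stein-type identity $\nabla g(\mathbf{x}) = \sigma^{-2}\,\mathbb{E}_{\boldsymbol{\eta}}\big[\mathcal{D}_{\boldsymbol{\theta}}(\mathbf{x}+\boldsymbol{\eta})\,\boldsymbol{\eta}^\top\big]$. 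For unit vectors $\mathbf{v},\mathbf{w}$, since $\boldsymbol{\eta}^\top\mathbf{v}\sim\mathcal{N}(0,\sigma^2)$ and $\|\mathcal{D}_{\boldsymbol{\theta}}\| \le M/2$ by assumption, we get $|\mathbf{w}^\top \nabla g(\mathbf{x})\,\mathbf{v}| \le \sigma^{-2}(M/2)\,\mathbb{E}|\boldsymbol{\eta}^\top\mathbf{v}| = \sigma^{-2}(M/2)\,\sigma\sqrt{2/\pi} = \frac{M}{\sqrt{2\pi}\,\sigma}$. Hence $\|\nabla g\|_2 \le \frac{M}{\sqrt{2\pi}\,\sigma}$ uniformly, which gives the Lipschitz estimate $\|g(\mathbf{x}^n) - g(\tilde{\mathbf{x}}^n)\| \le \frac{M}{\sqrt{2\pi}\,\sigma}\,\|\mathbf{e}_n\|$. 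This is precisely the step that converts the merely bounded denoiser into a slope-controlled map and injects the decisive $1/(\sqrt{2\pi}\,\sigma)$ factor.

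Combining the two bounds produces the scalar recursion $\|\mathbf{e}_{n+1}\| \le a + \beta\,\|\mathbf{e}_n\|$ with $a = \alpha\|\mathbf{A}\|_2\|\boldsymbol{\delta}\|$ and $\beta = \frac{M\alpha}{\sqrt{2\pi}\,\sigma}$. Since $\mathbf{e}_0 = -\mathbf{A}^H\boldsymbol{\delta}$ gives $\|\mathbf{e}_0\| \le \|\mathbf{A}\|_2\|\boldsymbol{\delta}\|$, unrolling the geometric recursion yields $\|\mathbf{e}_n\| \le a\sum_{k=0}^{n-1}\beta^k + \beta^n\|\mathbf{e}_0\| = \alpha\|\mathbf{A}\|_2\|\boldsymbol{\delta}\|\,\frac{1-\beta^n}{1-\beta} + \|\mathbf{A}\|_2\,\beta^n\|\boldsymbol{\delta}\|$, and factoring out $\|\boldsymbol{\delta}\|$ recovers exactly $C_n = \alpha\|\mathbf{A}\|_2\,\frac{1-\beta^n}{1-\beta} + \|\mathbf{A}\|_2\,\beta^n$, which completes the proof. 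As a closing remark I would note that when $\beta < 1$ (i.e.\ $\sigma$ large enough relative to $M\alpha$), $C_n$ stays bounded as $n\to\infty$, giving the stability-with-unrolling-depth behavior advertised earlier.
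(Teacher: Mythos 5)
Your proof is correct and reaches the paper's bound with exactly the same skeleton: the closed-form DC update, the subtraction of the perturbed and unperturbed recursions, the Lipschitz control of the smoothed denoiser with constant $\frac{M}{\sqrt{2\pi}\sigma}$, and the unrolled geometric recursion with $\mathbf{e}_0 = -\mathbf{A}^H\boldsymbol{\delta}$. The one place you genuinely diverge is the proof of the Lipschitz lemma itself. The paper argues in the ``randomized smoothing'' style: it rewrites $g(\x)-g(\x+\boldsymbol\delta)$ as an integral of $f$ against the difference of two shifted Gaussian densities, applies H\"older, and invokes a total-variation bound on shifted Gaussians to obtain $\int |\mu(\boldsymbol w-\x)-\mu(\boldsymbol w-\x-\boldsymbol\delta)|\,d\boldsymbol w \leq \tfrac{2}{\sqrt{2\pi}\sigma}\|\boldsymbol\delta\|$. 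You instead differentiate under the integral to get the Stein-type identity $\nabla g(\x)=\sigma^{-2}\,\mathbb{E}[\mathcal{D}_{\boldsymbol\theta}(\x+\boldsymbol\eta)\boldsymbol\eta^\top]$ and bound the Jacobian's operator norm via $\mathbb{E}|\boldsymbol\eta^\top\mathbf{v}|=\sigma\sqrt{2/\pi}$. Both yield the identical constant; your route is arguably cleaner in that it avoids citing an external total-variation lemma and makes the smoothing mechanism (bounded function $\to$ slope-controlled map) explicit, at the mild cost of having to justify differentiation under the integral sign (routine here, since $\mathcal{D}_{\boldsymbol\theta}$ is bounded and the Gaussian kernel has integrable derivatives). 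Your closing remark on boundedness of $C_n$ when $\frac{M\alpha}{\sqrt{2\pi}\sigma}<1$ matches the discussion the paper gives after the theorem.
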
}
%



\textcolor{black}{
The proof is provided in the Appendix.
Note that the output of SMUG $\mathbf{x}^n_{\text{S}}(\cdot)$ depends on both the initial input (here $\mathbf{A}^H \mathbf{y}$) and the measurements $\mathbf{y}$. We abbreviated it to $\mathbf{x}^n_{\text{S}}(\mathbf{A}^H \mathbf{y})$ in the theorem and proof for notational simplicity.
The constant $C_n$ depends on the number of iterations or unrolling steps $n$ as well as the RS standard deviation parameter $\sigma$. 
For large $\sigma$, the robustness error bound for SMUG clearly decreases as the number of iterations $n$ increases.
In particular, if $\sigma > M \alpha/\sqrt{2 \pi}$, then as $n \rightarrow \infty$, $C_n \rightarrow \alpha \left \| \mathbf{A} \right \|_2/\begin{pmatrix}
1 - \frac{M\alpha}{\sqrt{2 \pi \sigma}}
\end{pmatrix}$.
Furthermore, as $\sigma \to \infty$, $C_n \to C \triangleq \alpha \left \| \mathbf{A} \right \|_2$.
Clearly, if $\alpha \leq 1$ and  $\left \| \mathbf{A} \right \|_2 \leq 1$ (normalized), then $C \leq 1$.}

\textcolor{black}{Thus, for sufficient smoothing, the error introduced in the SMUG output due to input perturbation never gets worse than the size of the input perturbation.
Therefore, the output is stable with respect to (w.r.t.) perturbations.
These results corroborate experimental results in Section~\ref{sec: experiment} on how SMUG is robust (whereas other methods, such as vanilla {\modl}, breakdown) when increasing the number of unrolling steps at test time, and is also more robust for larger $\sigma$ (with good accuracy-robustness trade-off). 
Also, the only assumption in our analysis is that the denoiser network output is bounded in norm. This consideration is handled readily when the denoiser network incorporates bounded activation functions such as the sigmoid or hyperbolic tangent. Alternatively, if we expect image intensities to lie within a certain range, a simple clipping operation in the network output would ensure boundedness for the analysis. 
The boundedness assumption is different from a non-expansiveness requirement; instead, it forms the basis for proving stability. The randomized smoothing (RS) component plays a pivotal role in ensuring the robustness bound, as it integrates smoothing into every unrolling step, stabilizing the outputs against input perturbations.}

A key distinction between SMUG and prior works, such as RS-E2E~\cite{wolfmaking}, is that smoothing is performed in every iteration. Moreover, while~\cite{wolfmaking} assumes the end-to-end mapping is bounded, in {\modl} or SMUG, it clearly isn't because the data-consistency step's output is unbounded as $\mathbf{y}$ grows.

\textcolor{black}{We remark that our intention with Theorem~1 is to establish a baseline of robustness intrinsic to models with unrolling architectures.}

\subsection{Solution to \textbf{(Q4)}: Weighted Smoothing}
In this subsection, we present a modified formulation of randomized smoothing to improve its performance in SMUG. Randomized smoothing in practice involves uniformly averaging images denoised with random perturbations. This can be viewed as a type of mean filter, which may lead to oversmoothing of structural information in practice. As such, we propose weighted randomized smoothing, which employs an encoder to assess a weighting (scalar) for each denoised image and subsequently applies the optimal weightings while aggregating images to enhance the reconstruction performance. 
\textcolor{black}{The approach with its image-adaptive smoothing mechanism could better combine image features based on their quality (see Fig.~\ref{weight check} later). Improved smoothing approaches could hold key value for image reconstruction problems, where the generated image is often directly evaluated.}
Our method not only surpasses the SMUG technique but also excels in enhancing image sharpness across various types of perturbation sources. This 
allows for a more versatile or flexible and effective approach for improving image quality under different conditions.

The weighted randomized smoothing operation applied on a function $f(\cdot)$ is as follows:
\begin{equation}
\label{eqn: WRS}
g_{\textrm{w}}(\x) := \frac{\mathbb{E}_{\boldsymbol \eta } [ w( \mathbf x + \boldsymbol \eta)
     f( \mathbf x + \boldsymbol \eta) ]}{\mathbb{E}_{\boldsymbol \eta } [ w( \mathbf x + \boldsymbol \eta) ] } \:,
\end{equation}
where $w(\cdot)$ is an input-dependent weighting function.

Based on the weighted smoothing in \eqref{eqn: WRS},
we introduce \textbf{Weighted \us} (\textbf{Fig.\,\ref{fig: combined}} bottom row). This approach involves \textcolor{black}{applying} weighted RS at each denoising step, and the weighting encoder is trained in conjunction with the denoiser during the fine-tuning stage. \textcolor{black}{For the weighting encoder in our experiments, we use a simple architecture consisting of five successive convolution, batch normalization, and ReLU activation layers followed by a linear layer and Sigmoid activation.} Specifically, in the $n$-th unrolling step, we use a weighting encoder $\mathcal E_{\boldsymbol \phi }$, parameterized by $\boldsymbol{\phi}$, to learn the weight of each image used for (weighted) averaging. Here, we use $\x^{n}_{\textrm{W}}$ to denote the output of the $n$-th block. Initializing $\x^{0}_{\textrm{W}} = \A^H \y$, the output of Weighted {\us} w.r.t. $n$ is
\begin{equation}
\begin{split}
    &\x_{\textrm{W}}^{n+1} = \argmin_{\x}~ \|\A \x - \y\|_2^2~+
    \\
    & \lambda     \begin{Vmatrix} \x- \frac{\mathbb{E}_{\boldsymbol \eta } [ \mathcal E_{\boldsymbol \phi } ( \mathbf x_{\textrm{W}}^{n} + \boldsymbol \eta) \mathcal{D}_{\boldsymbol \theta} ( \mathbf x_{\textrm{W}}^{n} + \boldsymbol \eta) ]}{\mathbb{E}_{\boldsymbol \eta } [ \mathcal E_{\boldsymbol \phi } ( \mathbf x_{\textrm{W}}^{n} + \boldsymbol \eta) ] } \end{Vmatrix}_2^2\:.
\end{split}
\end{equation}
%


After $N$ iterations, the final output of Weighted SMUG is $\x^N_{\text{W}} = \boldsymbol{F}_{\text{wSMUG}}(\x^0)$.
%
%
Figure~\ref{fig: combined} bottom row illustrates the \textcolor{black}{block diagram} of weighted {\us}. 
%
%

Furthermore, we extend the ``pre-training+fine-tuning'' approach proposed in Section~\ref{sec: how to smoothing} to the Weighted SMUG method.
In this case, we obtain the \textbf{fine-tuned} $\boldsymbol{\theta}$ and $\boldsymbol{\phi}$ by using
\begin{equation}
   \displaystyle \min_{\boldsymbol \theta, \boldsymbol \phi} \, \mathbb{E}_{(\mathbf y, \mathbf t)} [ \lambda_l \| 
   {F}_{\text{wSMUG}}
   (\mathbf A^H \mathbf y) - \mathbf t \|_2^2  +    \ell_{\mathrm{UStab}}(\boldsymbol \theta ;   \mathbf y, \mathbf t) ].
\label{eq: finetune_loss_weighted}
\end{equation}

\subsection{\textcolor{black}{Integrating RS into Other Unrolled Networks}}\label{sec: SMUG into other unrolled NWs}

\textcolor{black}{In this subsection, we further discuss the extension of our {\us} schemes for other unrolling based reconstructors, using ISTA-Net \cite{zhang2018istanet} and E2E-VarNet\cite{sriram2020end} as an example. The goal is to demonstrate the generality of our proposed approaches for deep unrolled models.}


\textcolor{black}{ISTA-Net uses a training loss function composed of discrepancy and constraint terms. In particular, it performs the following for $N$ unrolling steps: 
\begin{equation}\label{eqn: ista step 1}
    \boldsymbol{r}^n = \x^{n-1} -\lambda^{(n)} \A^H(\A \x^{n-1} -\y)
\end{equation}
\begin{equation}\label{eqn: ista step 2}
    \x^n = \mathcal{\hat{F}}^n (\textbf{Soft}(\mathcal{F}^n(\boldsymbol{r}^n) ,\theta^n))\:,
\end{equation}
where $\mathcal{\hat{F}}$ and $\mathcal{F}$ involve two linear convolutional layers (without bias terms) separated by ReLU activations, and $\mathcal{\hat{F}}^n \circ \mathcal{F}^n$ are constrained close to the identity operator. The function $\textbf{Soft}$ performs soft-thresholding with parameter $\theta^n$~\cite{zhang2018istanet}.
}

\textcolor{black}{Similar to SMUG for MoDL, we integrate RS into the network-based regularization (denoising) component of ISTA-Net.
This results in the following modification to \eqref{eqn: ista step 2}: 
\begin{equation}\label{eqn: ista SMUG step 2}
    \x^n = \mathbb{E}_{\boldsymbol{\eta}} [\mathcal{\hat{F}}^n (\textbf{Soft}(\mathcal{F}^n(\boldsymbol{r}^n+\boldsymbol{\eta}) ,\theta^n))]\:,
\end{equation}
where $\boldsymbol{\eta}$ is drawn from $\mathcal{N}(\mathbf 0, \sigma^2\mathbf{I})$. For weighted SMUG, \eqref{eqn: ista step 2} becomes
\begin{equation}\label{eqn:istawSMUGstep2}
    \x^n = \frac{\mathbb{E}_{\boldsymbol{\eta}} [\mathcal E_{\boldsymbol \phi } ( \boldsymbol{r}^{n} + \boldsymbol \eta)\mathcal{\hat{F}}^n (\textbf{Soft}(\mathcal{F}^n(\boldsymbol{r}^n+\boldsymbol{\eta}), \theta^n))]}{\mathbb{E}_{\boldsymbol{\eta}} [\mathcal E_{\boldsymbol \phi } ( \boldsymbol{r}^{n} + \boldsymbol \eta)]}\:.
\end{equation}
}

\textcolor{black}{We explore extending SMUG to an additional unrolling reconstructor, E2E-VarNet.}
\textcolor{black}{E2E-VarNet 
unrolls the following iteration for $N$ steps with updates performed in the measurement space:}
\textcolor{black}{\begin{equation}
\boldsymbol k_{t+1} = \boldsymbol k_t - \boldsymbol \eta_t \, \boldsymbol M \bigl(\boldsymbol k_t - \boldsymbol {\tilde{k}}\bigr) \;+\; \boldsymbol G\bigl(\boldsymbol k_t\bigr),
\end{equation}}
\textcolor{black}{where $\boldsymbol G$ is the refinement or denoising regularization module given by
\begin{equation}
\boldsymbol G\bigl(\boldsymbol k_t\bigr)
=
\boldsymbol F \circ \boldsymbol S \circ \textbf{CNN}\bigl(\boldsymbol S^{-1} \circ \boldsymbol F^{-1}(\boldsymbol k_t)\bigr).
\end{equation}}
\textcolor{black}{Here, $\textbf{CNN}$ is any parametric function that takes a complex
image as input and maps it to another complex image. Since it is applied after combining all coils into a single complex image, the same network can be used for scans with different numbers of coils. $\boldsymbol S$ and $\boldsymbol F$ denote coil-wise sensitivity weighting and Fourier transform, respectively, and $`\circ'$ denotes composition.}

\textcolor{black}{We integrate SMUG with E2E-VarNet by the following modification: 
\begin{equation}
\boldsymbol G\bigl(\boldsymbol k_t\bigr)
=
\boldsymbol F \circ \boldsymbol S \circ \mathbb{E}_{\boldsymbol{\eta}} [\textbf{CNN}\bigl(\boldsymbol S^{-1} \circ \boldsymbol F^{-1}(\boldsymbol k_t) + \boldsymbol{\eta} \bigr)],
\end{equation}
where $\boldsymbol{\eta}$ is drawn from $\mathcal{N}(\mathbf 0, \sigma^2\mathbf{I})$. 
The extension with Weighted SMUG is done similar to the case in~\eqref{eqn:istawSMUGstep2}.}



\section{Experiments}
\label{sec: experiment}
\begin{figure*}
    \centering
    \includegraphics[width=0.88\textwidth]{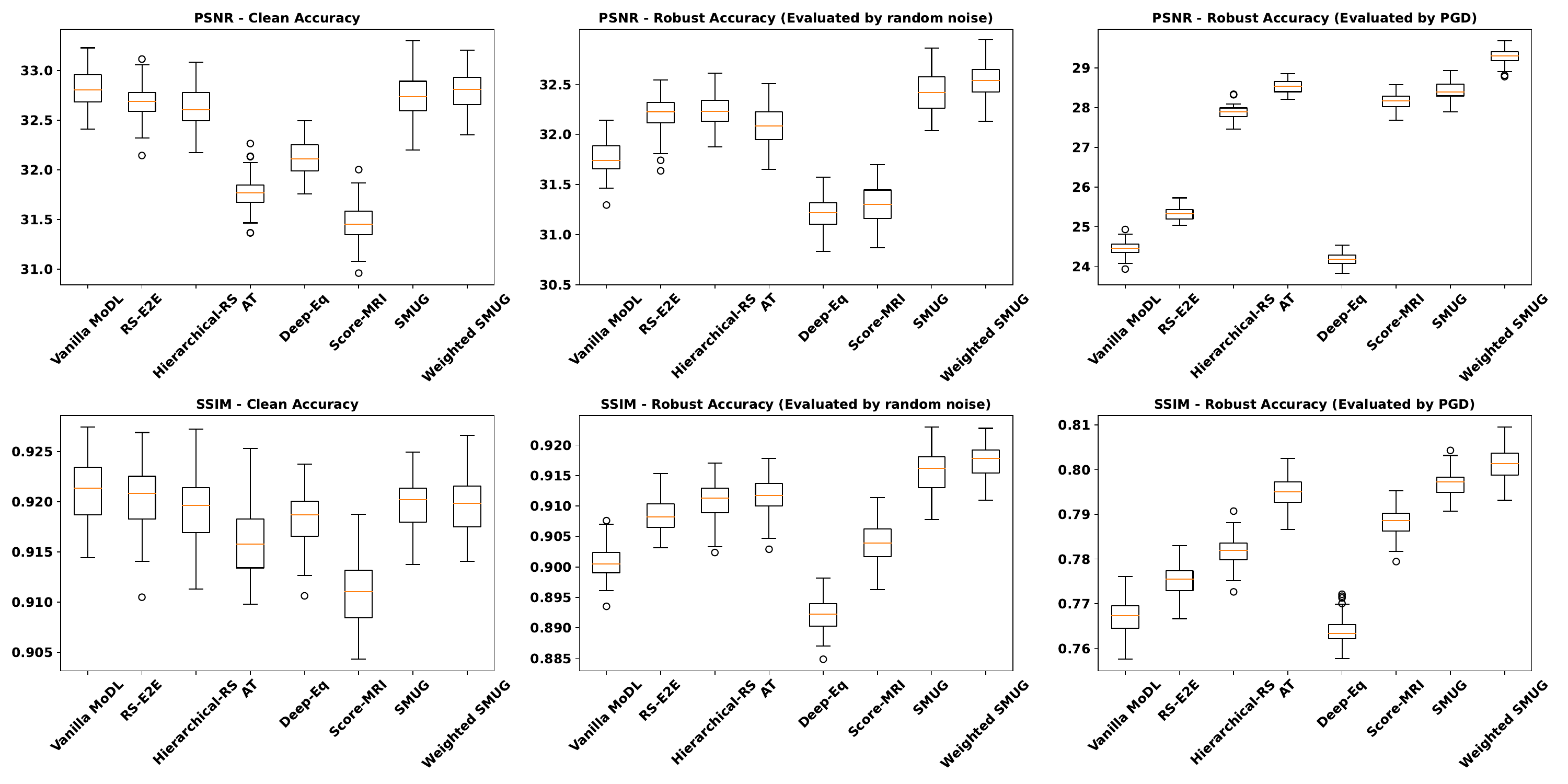}
    \vspace{-0.35cm}
    \caption{{ \textcolor{black}{Reconstruction accuracy box plots for the fastMRI \textbf{brain}  dataset with 4x acceleration factor. The additive random Gaussian noise of the second column plots is obtained using standard deviation of $0.01$. The worst-case additive noise of the third  column is obtained using the PGD method with $\epsilon = 0.02$}. }}
    \label{fig: box plot knee}
\end{figure*}
\begin{figure*}[!t]
\centering
\begin{tabular}[b]{cccc}
    \textbf{\small Ground Truth}& 
    \textbf{\small Vanilla {\modl}}&
    \textbf{\small RS-E2E}&
    \textbf{\small \textsc{SMUG}}\\  
    \includegraphics[width=.13\linewidth,valign=t]{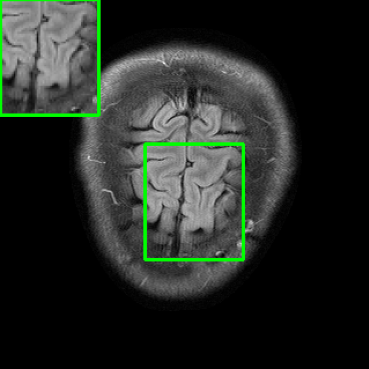}&
    \includegraphics[width=.13\linewidth,valign=t]{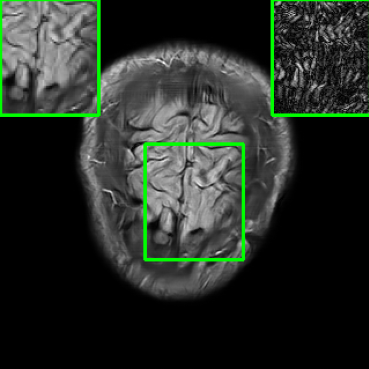}&
   \includegraphics[width=.13\linewidth,valign=t]{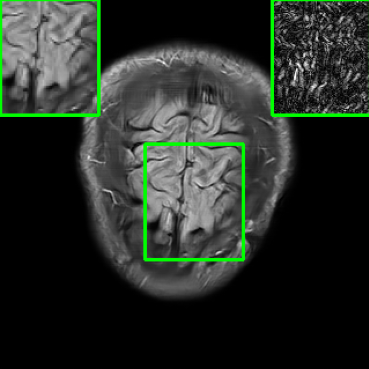} &
    \includegraphics[width=.13\linewidth,valign=t]{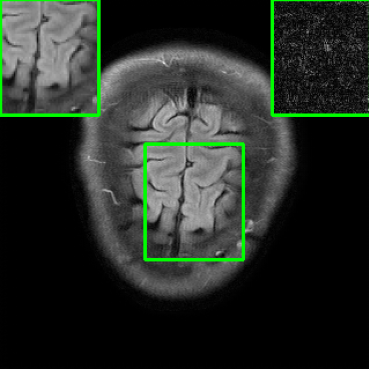}\\
    \scriptsize{PSNR = $\infty$ dB} & \scriptsize{PSNR = 24.84 dB} & \scriptsize{PSNR = 25.78 dB}& \scriptsize{\textbf{PSNR = 30.81 dB}}\\
    \textbf{\small AT}&
    \textbf{\small Score-MRI}&
    \textbf{\small Deep-Equilibrium}&
    \textbf{\small Weighted-SMUG}\\  
    \includegraphics[width=.13\linewidth,valign=t]{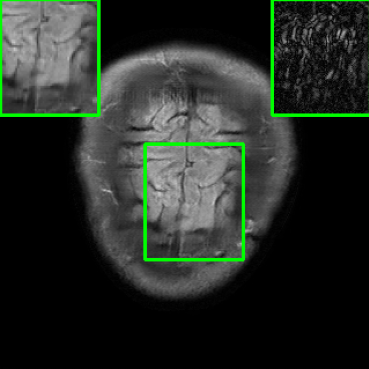} &
   \includegraphics[width=.13\linewidth,valign=t]{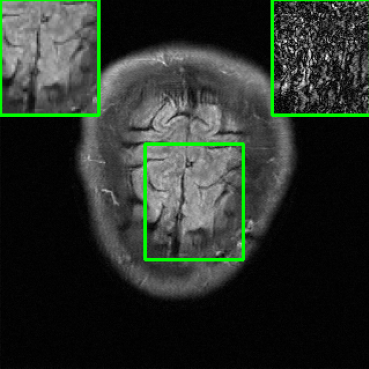} &
    \includegraphics[width=.13\linewidth,valign=t]{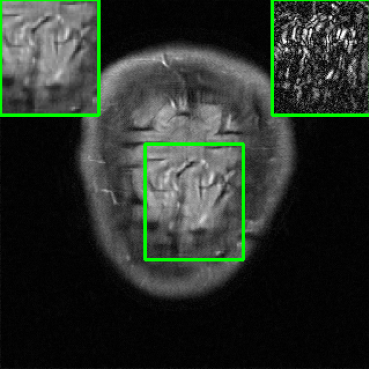} &
     \includegraphics[width=.13\linewidth,valign=t]{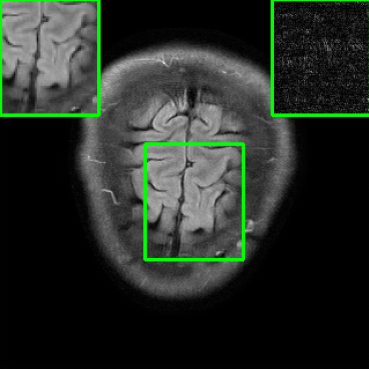}\\
         \scriptsize{PSNR = 30.72 dB} & \scriptsize{PSNR = 30.21 dB} & \scriptsize{PSNR = 24.58 dB}& \scriptsize{\textbf{PSNR = 31.41 dB}}\\
\end{tabular}
\caption{\textcolor{black}{Visualization of ground truth and reconstructed images using different methods for 4x k-space undersampling, evaluated on PGD-generated worst-case inputs of perturbation strength $\epsilon = 0.02$. \textcolor{black}{The reconstruction PSNRs are shown with the best values bolded.}}}
\label{fig:denoised_imgs_zoomed_300case}
\vspace{-0.0 in}
\end{figure*}

\begin{figure*}
    \centering
    \includegraphics[width=0.88\textwidth]{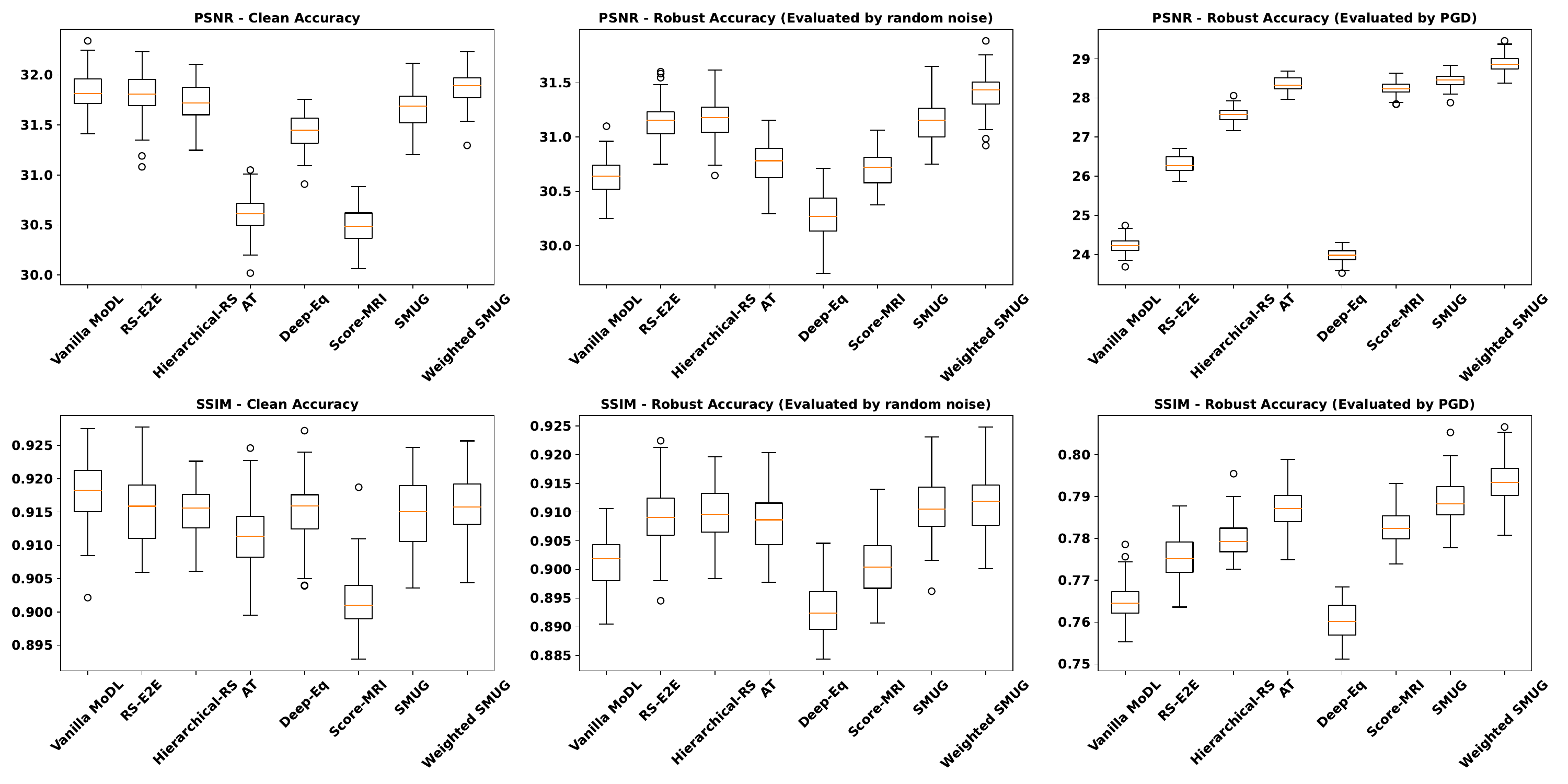}
    \vspace{-0.35cm}
    \caption{{ \textcolor{black}{Reconstruction accuracy box plots for the fastMRI \textbf{knee} 
 dataset with 4x Acceleration factor. The additive random Gaussian noise of the second column plots is obtained using a standard deviation of $0.01$. The worst-case additive noise of the third  column is obtained using the PGD  method with $\epsilon = 0.02$}. }}
    \label{fig: box plot brain}
    \vspace{-0.1 in}
\end{figure*}

\begin{figure*}[!t]
\centering
\begin{tabular}[b]{cccc}
    \textbf{Ground Truth}& 
    \textbf{Vanilla {\modl}}&
    \textbf{RS-E2E}&
    \textbf{\textsc{SMUG}}\\  
    \includegraphics[width=.12\linewidth,valign=t]{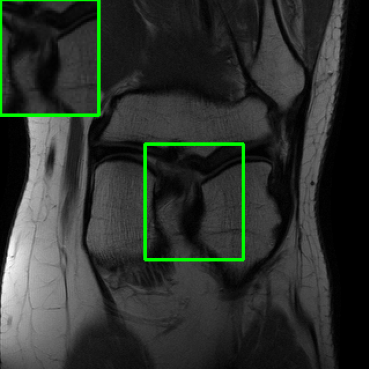}&
    \includegraphics[width=.12\linewidth,valign=t]{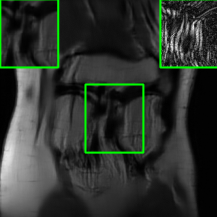}&
   \includegraphics[width=.12\linewidth,valign=t]{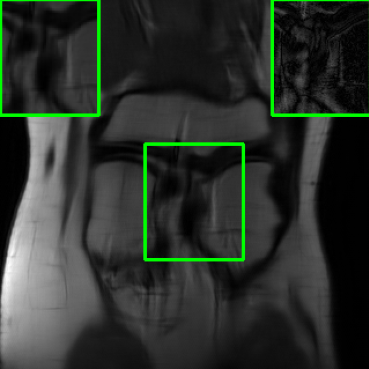} &
    \includegraphics[width=.12\linewidth,valign=t]{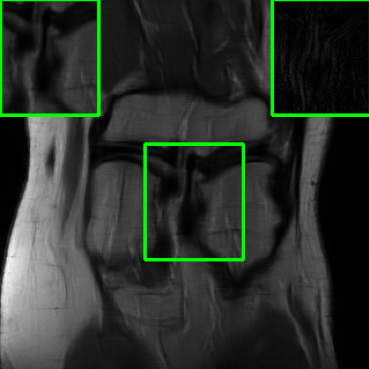}\\
    \scriptsize{PSNR = $\infty$ dB} & \scriptsize{PSNR = 23.41 dB} & \scriptsize{PSNR = 24.58 dB}& \scriptsize{\textbf{PSNR = 28.91 dB}}\\
    \textbf{AT}&
    \textbf{Score-MRI}&
    \textbf{\small Deep-Equilibrium}&
    \textbf{\small Weighted-SMUG}\\  
    \includegraphics[width=.12\linewidth,valign=t]{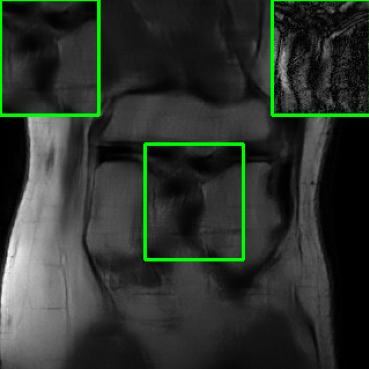} &
    \includegraphics[width=.12\linewidth,valign=t]{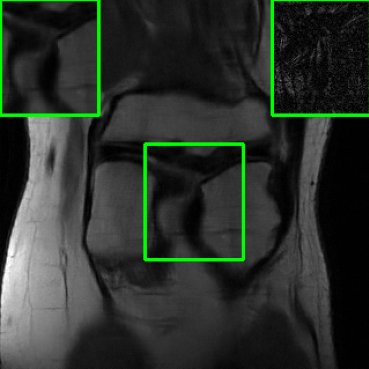} &
    \includegraphics[width=.12\linewidth,valign=t]{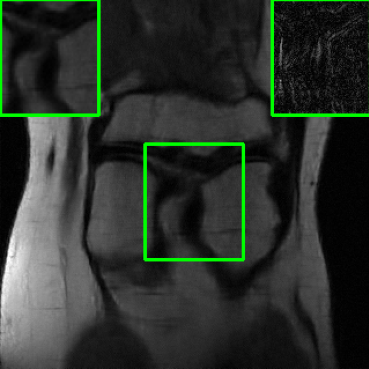} &
    \includegraphics[width=.12\linewidth,valign=t]{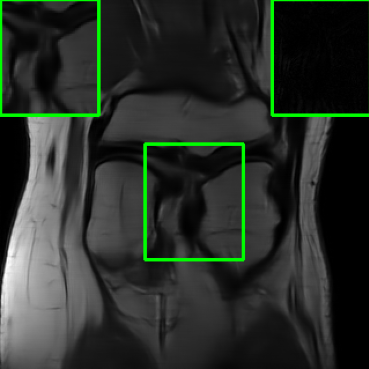}\\
        \scriptsize{PSNR = 28.67 dB} & \scriptsize{PSNR = 27.89 dB} & \scriptsize{PSNR = 24.13 dB}& \scriptsize{\textbf{PSNR = 29.41 dB}}\\
\end{tabular}
\caption{\textcolor{black}{Visualization of ground-truth and reconstructed images using different methods for 4x k-space undersampling, evaluated on PGD-generated worst-case inputs of perturbation strength $\epsilon = 0.02$.} \textcolor{black}{The reconstruction PSNRs are shown with the best values bolded.}}
\label{fig:denoised_imgs_zoomed_3000case_4x}
\vspace{-0.1 in}
\end{figure*}

\begin{figure*}
    \centering
    \includegraphics[width=0.88\textwidth]{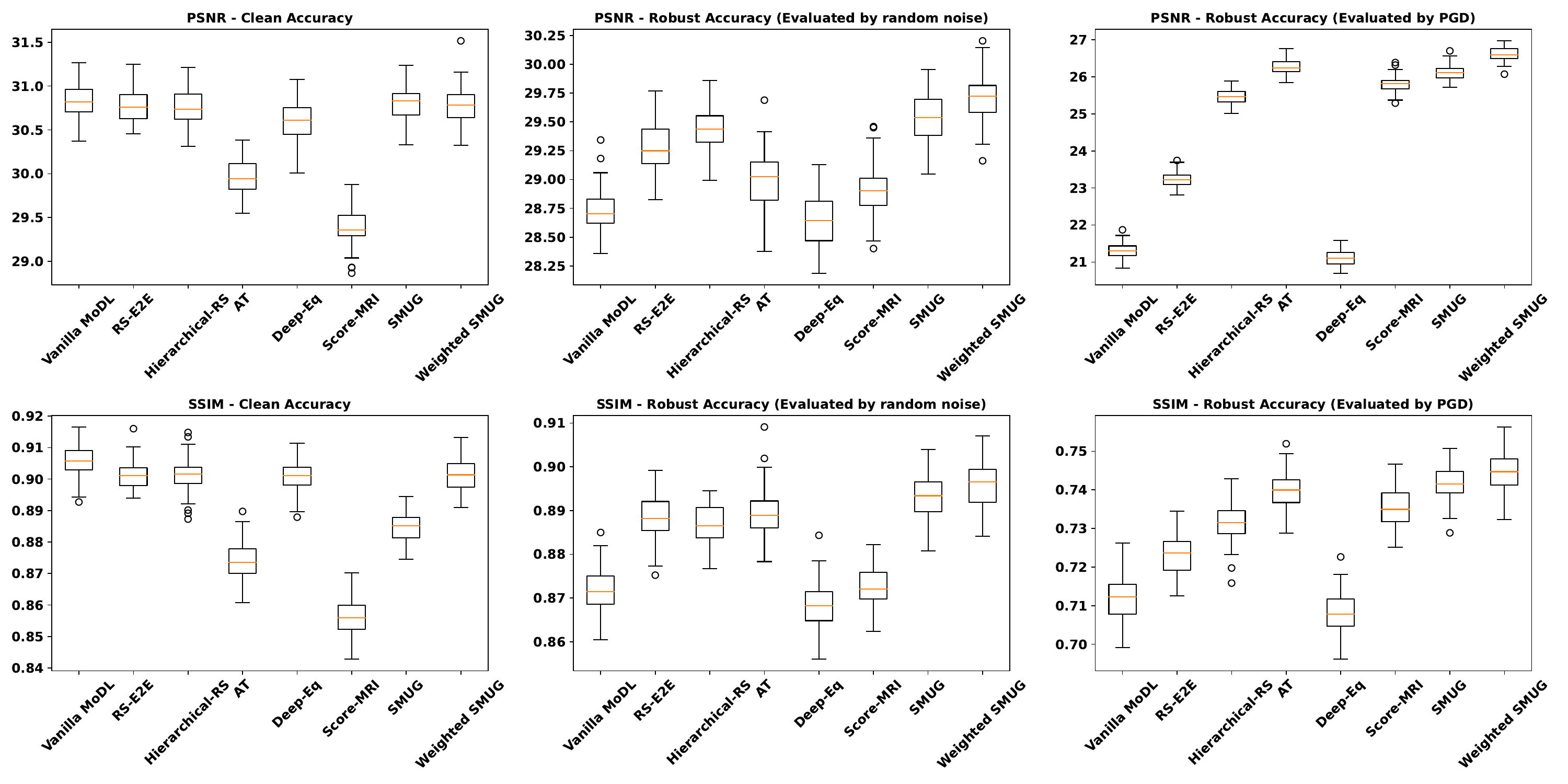}
    \vspace{-0.35cm}
    \caption{{ \textcolor{black}{Reconstruction accuracy box plots for the fastMRI \textbf{knee} 
 dataset with 8x Acceleration factor. The additive random Gaussian noise in the second column plots is obtained using a standard deviation of $0.01$. The worst-case additive noise in the third  column is obtained using the PGD  method with $\epsilon = 0.02$}.}}
    \label{fig: box plot brain 8x}
    \vspace{-0.15 in}
\end{figure*}

\begin{figure*}[!t]
\centering
\begin{tabular}[b]{cccc}
    \textbf{Ground Truth}& 
    \textbf{Vanilla {\modl}}&
    \textbf{RS-E2E}&
    \textbf{\textsc{SMUG}}\\  
    \includegraphics[width=.12\linewidth,valign=t]{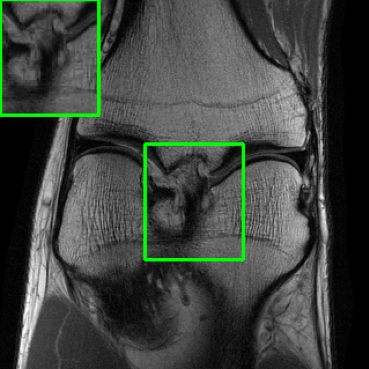}&
    \includegraphics[width=.12\linewidth,valign=t]{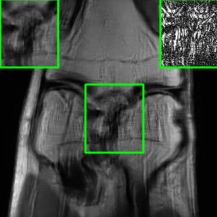}&
   \includegraphics[width=.12\linewidth,valign=t]{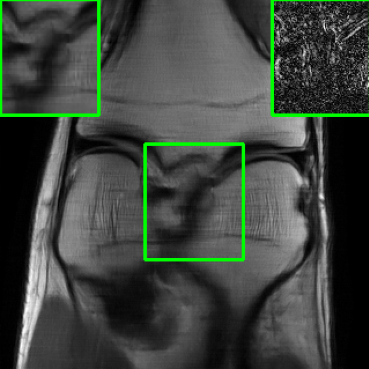} &
    \includegraphics[width=.12\linewidth,valign=t]{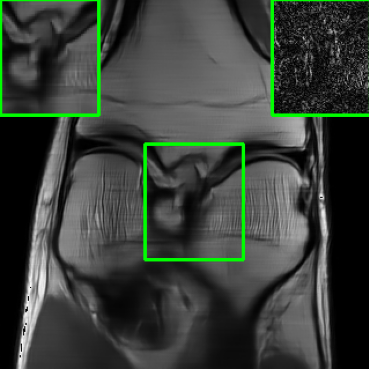}\\
    \scriptsize{PSNR = $\infty$ dB} & \scriptsize{PSNR = 21.48 dB} & \scriptsize{PSNR = 23.09 dB}& \scriptsize{\textbf{PSNR = 26.51 dB}}\\
    \textbf{AT}&
    \textbf{Score-MRI}&
    \textbf{\small Deep-Equilibrium}&
    \textbf{\small Weighted-SMUG}\\  
    \includegraphics[width=.12\linewidth,valign=t]{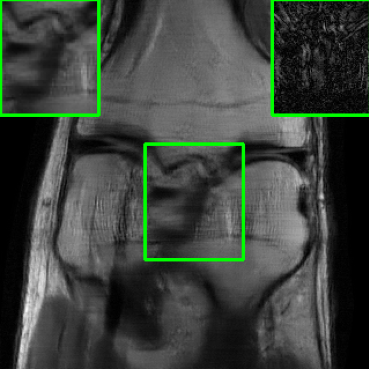} &
    \includegraphics[width=.12\linewidth,valign=t]{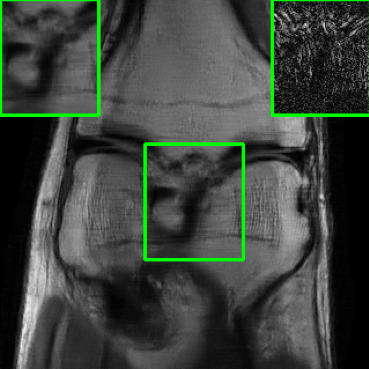} &
    \includegraphics[width=.12\linewidth,valign=t]{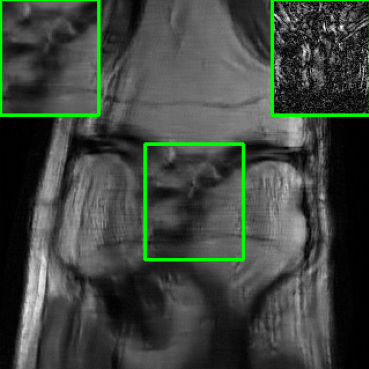} &
    \includegraphics[width=.12\linewidth,valign=t]{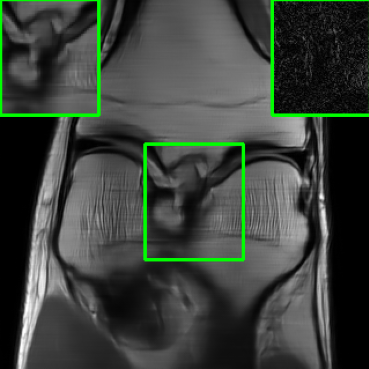}\\
        \scriptsize{PSNR = 26.34 dB} & \scriptsize{PSNR = 25.78 dB} & \scriptsize{PSNR = 21.32 dB}& \scriptsize{\textbf{PSNR = 26.89 dB}}\\
\end{tabular}

\caption{\textcolor{black}{Visualization of ground truth and reconstructed images using different methods for 8x k-space undersampling, evaluated on PGD-generated worst-case inputs of perturbation scaling $\epsilon = 0.02$.} \textcolor{black}{The reconstruction PSNRs are also shown with the best values bolded.}}
\label{fig:denoised_imgs_zoomed_3000case_8x}
\vspace{-0.1 in}
\end{figure*}

\subsection{Experimental Setup} 
\label{sec:experiment:setup}


\noindent \textbf{Models \& Sampling Masks:} For the MoDL architecture, we use the recent state-of-the-art denoising network Deep iterative Down Network, which consists of $3$ down-up blocks (DUBs) and 64 channels \cite{yu2019deep}. Additionally, for {\modl}, we use $N = 8$ unrolling steps with denoising regularization parameter $\lambda = 1$. The conjugate gradient method \cite{Aggarwal2019MoDL:Problems}, with a tolerance level of $10^{-6}$, is utilized to execute the DC block. 
\textcolor{black}{We used variable density 
Cartesian random undersampling masks in 
k-space, one for each undersampling factor that include a fully-sampled central k-space region and the remaining phase encode lines were sampled uniformly at random.}
The coil sensitivity maps for all scenarios were generated with the BART toolbox~\cite{tamir2016generalized}.
\textcolor{black}{Extension to the ISTA-Net model is discussed in Section~\ref{istanetresults}.}

\noindent \textbf{Baselines:} \textcolor{black}{We consider three robustification approaches: the RS-E2E method~\cite{jia2022on} presented in \eqref{eq: denoised smoothing mri}, Adversarial Training (AT)~\cite{jia2022robustness}, and the recent Hierarchical Randomized Smoothing~\cite{scholten2024hierarchical}. Furthermore, we consider other recent reconstruction models, specifically, the Deep Equilibrium (Deep-Eq) method~\cite{Deep-eq} and a leading diffusion-based MRI reconstruction model from~\cite{chung2022score}, which we denote as Score-MRI.}

\noindent \textbf{Datasets \& Training:} For our study, we execute two experimental cases. For the first case, we utilize the \texttt{fastMRI} knee dataset, with 32 scans for validation and 64 unseen scans/slices for testing. In the second case, we employ our method for the \texttt{fastMRI} brain dataset. 
We used $3000$ training scans in both cases.
\textcolor{black}{The k-space data are normalized so that the real and imaginary components are in the range $[-1, 1]$.}
We use a batch size of $2$ and $60$ training epochs. The experiments are run using two A5000 GPUs. The ADAM optimizer \cite{kingma2014adam} is utilized for training the network weights with momentum parameters of $(0.5, 0.999)$ and learning rate of $10^{-4}$. The stability parameter $\lambda_\ell$ in \eqref{eq: finetune_loss} (and \eqref{eq: finetune_loss_weighted}) is tuned so that the standard accuracy of the learned model is comparable to vanilla {\modl}.
For RS-E2E, we set the standard deviation of Gaussian noise to $\sigma = 0.01$, and use $10$ Monte Carlo samplings to implement the smoothing operation. Note that in our experiments, Gaussian noise and corruptions are added to real and imaginary parts of the data with the indicated $\sigma$. 

\textcolor{black}{For AT, we implemented a 30-step PGD procedure within its minimax formulation with $\epsilon = 0.02$. For Score MRI, we used 150 steps for the reverse diffusion process with the pre-trained model\footnote{\url{https://github.com/HJ-harry/score-MRI}}. 
We fine-tuned a pre-trained Deep-Eq model\footnote{\url{https://github.com/dgilton/deep_equilibrium_inverse}} with the same data as the proposed schemes.
Unless specified, training parameters were similar across the compared methods.}


\noindent \textbf{Testing:} We evaluate our methods on clean data (without additional perturbations), data with randomly injected noise, and data contaminated with worst-case additive perturbations. The worst-case disturbances allow us to see worst-case method sensitivity and
are generated by the $\ell_\infty$-norm based PGD scheme with 10 steps \cite{antun2020instabilities} corresponding to  $\left \| \boldsymbol \delta \right \|_{\infty} \leq \epsilon$, where $\epsilon$ is set nominally as the maximum underlying k-space real and imaginary part magnitude scaled by $0.05$. We will indicate the scaling for $\epsilon$ (e.g., $0.05$) in the results and plots that follow. The quality of reconstructed images is measured using  
peak signal-to-noise ratio (PSNR) and structure similarity index measure (SSIM)\cite{wang2004image}.
In addition to the worst-case perturbations and random noise, we evaluate the performance of our methods in the presence of additional instability sources such as (i) different undersampling rates, and (ii) different numbers of unrolling steps.

\begin{figure}[hbt!]
\vspace{-0.0in}
\centering
\setlength{\tabcolsep}{0.3cm}
\includegraphics[width=1.0\linewidth]{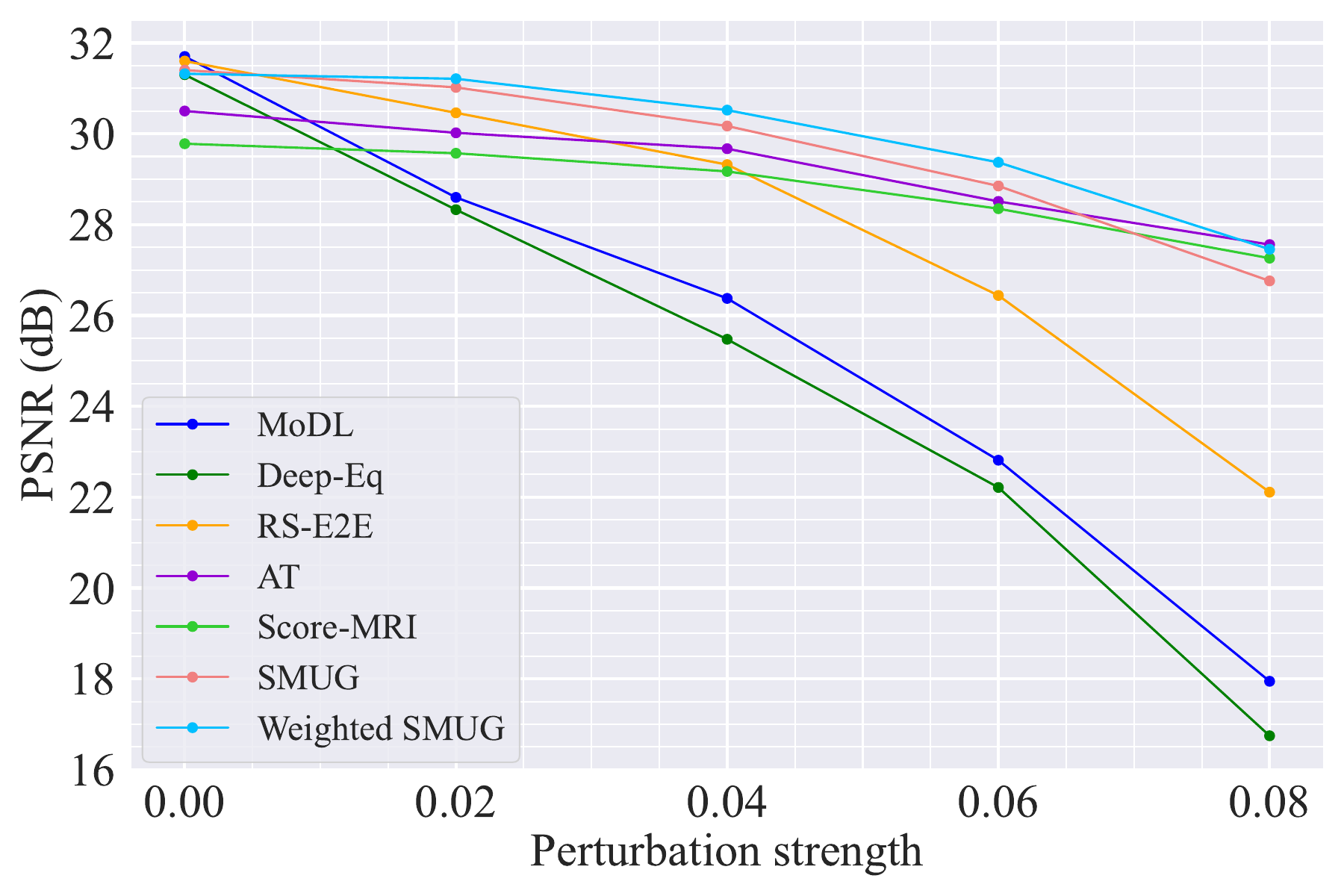}
\vspace{-0.35cm}
  \caption{\textcolor{black}{PSNR  of baseline methods and the proposed method versus perturbation strength (i.e., scaling) $\epsilon$ used in  PGD-generated worst-case examples at testing time with 4x k-space undersampling. $\epsilon = 0$ corresponds to clean accuracy. 
    }}
\label{fig:archi_PSNR}
\vspace{-0.2 in}
\end{figure}

\begin{figure}[hbt!]
\centering
\setlength{\tabcolsep}{0.3cm}
\includegraphics[width=1.0\linewidth]{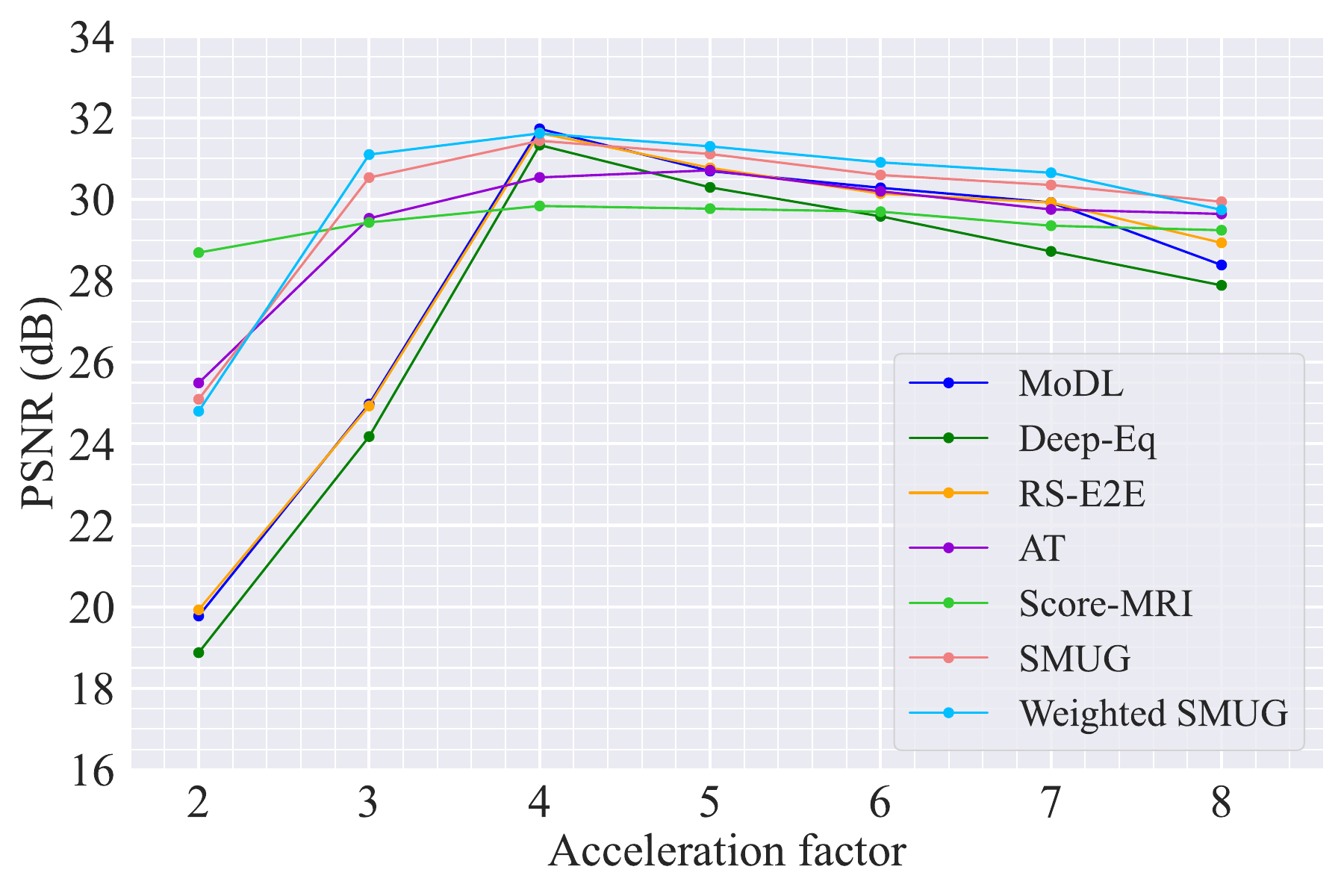}
\vspace{-0.4cm}
    \caption{\textcolor{black}{PSNR results for different MRI reconstruction methods versus
    different measurement sampling rates (models trained at $4\times$ acceleration).}}
    \label{fig: robustness}
    \vspace{-0.14in}
\end{figure}

\subsection{Robustness Results}
\textcolor{black}{\noindent \textbf{Results for the FastMRI Brain Dataset:} we present the robustness results of the proposed approaches w.r.t. additive noise. In particular, the evaluation is conducted on the clean, noisy (with added Gaussian noise), and worst-case perturbed (using PGD for each method) measurements. \textbf{Fig.~\ref{fig: box plot knee}} presents testing set PSNR and SSIM values as box plots for different smoothing architectures, 
along with vanilla {\modl} and the other baselines using the brain dataset. The clean accuracies of Weighted {\us} and {\us} are similar to vanilla MoDL indicating a good clean accuracy vs. robustness trade-off.
As indicated by the PSNR and SSIM values, we observe that weighted {\us}, on average, outperforms all other baselines in robust accuracy (the second and third set of box plots of the two rows in \textbf{Fig.~\ref{fig: box plot knee}}). This observation is consistent with the visualization of reconstructed images for the brain dataset in \textbf{Fig.\,\ref{fig:denoised_imgs_zoomed_300case}}. 
We note that weighted {\us} requires longer time for training, which represents a trade-off. 
When comparing to AT, we observe that AT is comparable to {\us} in the case of robust (or worst-case noise) accuracy. However, the drop in clean accuracy (without perturbations) for AT is significantly larger than for {\us}. Furthermore, AT takes a much longer training time as it requires to solve an optimization problem (PGD) for every training data sample at every iteration to obtain the worst-case perturbations. 
Furthermore, we observe that its effectiveness is degraded for other perturbations including random noise as well as modified sampling rates 
shown in the next subsection. Importantly, the proposed SMUG and Weighted SMUG are not trained to be robust to any specific perturbations or instabilities, but are nevertheless effective for several scenarios.}

\textcolor{black}{In comparison to the diffusion based Score-MRI, the proposed methods perform better in terms of both
clean accuracy and random noise accuracy. 
Although for worst-case perturbations, the PSNR values of Score-MRI are only slightly worse than {\us}, \textcolor{black}{it is important to note that not only the training of diffusion-based models takes longer than our method, but also the inference time is longer as Score-MRI requires to perform nearly 150 sampling steps to process one scan and takes nearly 5 minutes with a single RTX5000 GPU, whereas our method takes only about 25 seconds per scan.} The SMUG schemes also substantially outperform the deep equilibrium model in the presence of perturbations.}

\noindent \textbf{Results for the FastMRI Knee Dataset:} \textcolor{black}{In \textbf{Fig~\ref{fig: box plot brain}} and \textbf{Fig~\ref{fig: box plot brain 8x}}, we report PSNR and SSIM results of different methods at two sampling acceleration factors for the knee dataset. Therein, we observe quite similar outcomes to those reported in \textbf{Fig~\ref{fig: box plot knee}}. }
\textcolor{black}{\textbf{Figs.~\ref{fig:denoised_imgs_zoomed_3000case_4x}} and~\textbf{\ref{fig:denoised_imgs_zoomed_3000case_8x}} show reconstructed images by different methods for knee scans at 4x and 8x undersampling, respectively. 
We observe that {\us} and Weighted {\us} show fewer artifacts, sharper features, and fewer errors when compared to Vanilla MoDL and other baselines in the presence of the worst-case perturbations.}

\noindent \textbf{Results on Adversarial Perturbation Strength:} In \textcolor{black}{\textbf{Fig.~\ref{fig:archi_PSNR}} presents average PSNR results over the test dataset for the considered models under different levels of worst-case perturbations (\textit{i.e.}, attack strength $\epsilon$). We used the knee dataset for this experiment. We observe that SMUG and weighted SMUG outperform RS-E2E, vanilla {\modl}, and Deep-Eq across all perturbation strengths. When compared to Score-MRI and AT, our proposed methods consistently maintain higher PSNR values for moderate to large perturbations (less than $\epsilon=0.08$). For instance, when $\epsilon = 0.02$, weighted {\us} reports more than 1 dB improvement over AT and Score-MRI.}

\noindent \textbf{Impact of the Undersampling Rate Disparities:} During training, a k-space undersampling or acceleration factor of 4x is used for our methods and the considered baselines. At testing time, we evaluate performance (in terms of PSNR) with acceleration factors ranging from 2x to 8x. The results are presented in \textbf{Fig.\,\ref{fig: robustness}}. It is clear that when the acceleration factor during testing matches that of the training phase (4x), all methods achieve their highest PSNR results.  \textcolor{black}{Conversely, performance generally declines when the acceleration factors differ. For acceleration factors 3x to 8x (ignoring 4x where models were trained), we observe that our methods outperform all the considered baselines. For the 2x case, our methods report higher PSNR values compared to RS-E2E, vanilla {\modl}, and Deep-Eq and slightly underperform AT, while Score-MRI shows more resilience at 2x.}

\begin{figure}[hbt!]
\centering
\setlength{\tabcolsep}{0.3cm}
\includegraphics[width=1.0\linewidth] {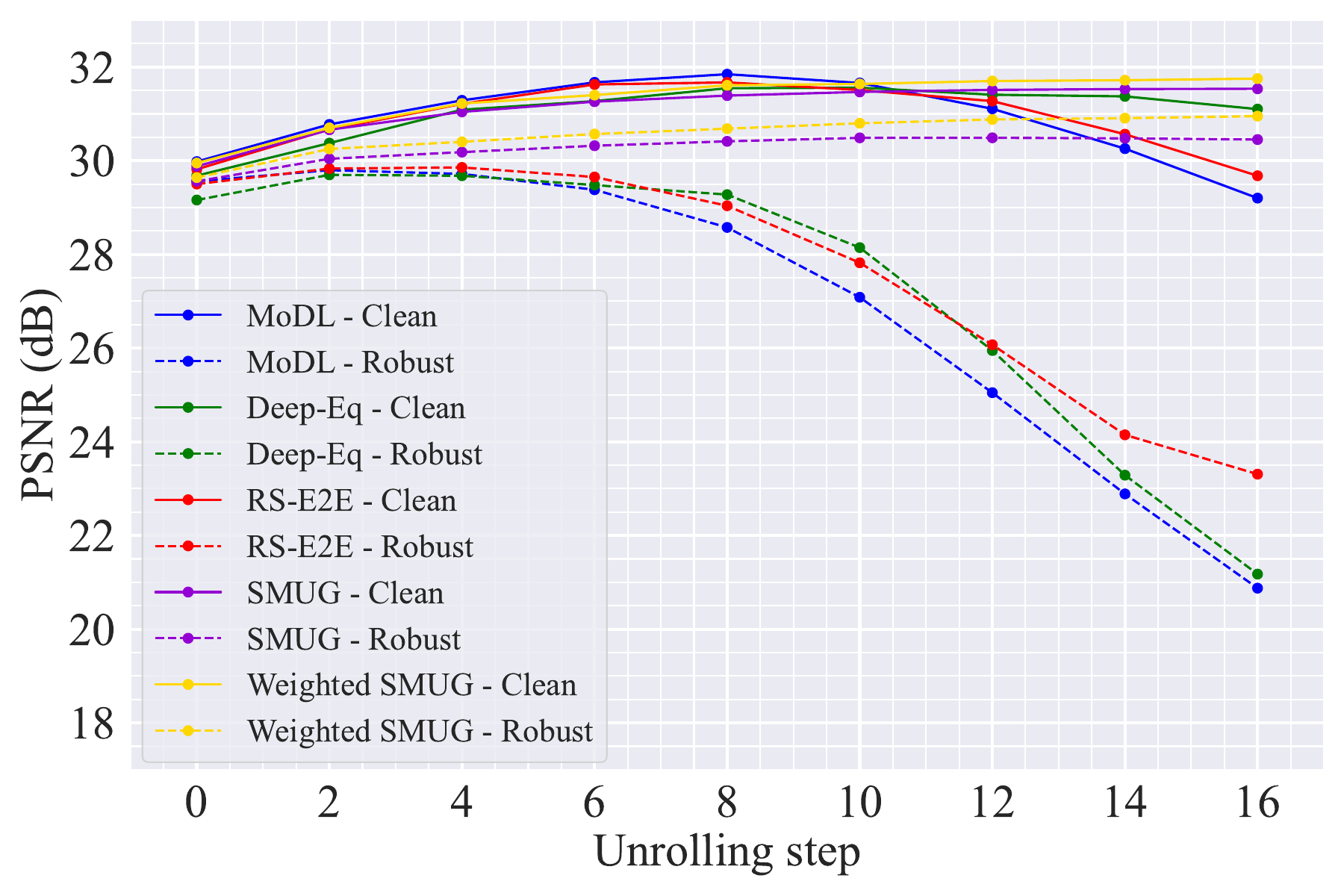}
\vspace{-0.3cm}
    \caption{
    \textcolor{black}{PSNR results for different MRI reconstruction methods at 4x k-space undersampling versus number of
    unrolling steps ($8$ steps used in training). ``Clean" and "Robust" denote the cases without and with added worst-case (for each method) measurement perturbations.}}
    \label{fig: robustness_unrolling}
    \vspace{-0.1in}
\end{figure}

\noindent \textbf{Results for the Unrolling Steps Disparities:} Here, we study the performance of varying unrolling steps. More specifically, during training, we utilize 8 unrolling steps to train our methods and the baselines. At testing time, we report the results of utilizing 1 to 16 unrolling steps. \textcolor{black}{The PSNR results of all the considered cases are given in \textbf{Fig.\, \ref{fig: robustness_unrolling}}. The results show that both {\us} and Weighted {\us} maintain performance comparable to the Deep Equilibrium model. 
Furthermore, when using different unrolling steps and faced with additive measurement perturbations, the SMUG methods' PSNR values are stable and close to the unperturbed case (indicating robustness), whereas the other methods see more drastic drop in performance.
This behavior for SMUG also agrees with the theoretical bounds in Section~\ref{method}.}

Although we do not intentionally design our method to mitigate {\modl}'s instabilities against different sampling rates and unrolling steps, the SMUG approaches nevertheless provide improved PSNRs over other baselines. 
This indicates broader value for the robustification strategies incorporated in our schemes.

\subsection{Behavior of SMUG and Weighted SMUG} 

\noindent \textbf{Effect for the Ustab Loss:} We conduct additional studies on the unrolled stability loss in our scheme to
show the importance of integrating target image denoising into {\us}'s training pipeline in \eqref{eq: unrolling loss}. \textbf{Fig.\,\ref{fig: unrolling_loss}} presents PSNR values versus perturbation strength/scaling ($\epsilon$) when using different alternatives to $\mathcal D_{\btheta} (\mathbf t)$ in~\eqref{eq: unrolling loss}, including 
$\mathbf t$ (the original target image), $\mathcal D_{\btheta}(\mathbf x_n)$ (denoised output of each unrolling step), and variants when using the fixed, vanilla {\modl}'s denoiser $\mathcal D_{\btheta_\text{\modl}}$ instead.
As we can see, the performance of {\us} varies when the UStab loss \eqref{eq: unrolling loss} is configured differently. The proposed $\mathcal D_{\btheta}(\mathbf t)$ outperforms other baselines. \textcolor{black}{A possible reason is that it infuses supervision of target images in an adaptive, denoising-friendly manner, \textit{i.e.}, taking the influence of $\mathcal D_{\btheta}$ into consideration. The configuration involving $\mathcal D_{\btheta}(\x_n)$ performs closest to using \(\mathcal D(t)\) in Fig~\ref{fig: unrolling_loss}. This indicates that a loss such as 
\(
\|\mathcal D_{\btheta}(\x_n + \eta) -\mathcal D_{\btheta}(\x_n)\|
\)
better guards the denoiser behavior with respect to noise perturbations compared to directly fitting the target $\mathbf t$. We conjecture the reason using \(D_{\btheta}(\mathbf t)\) is even more robust is because it enables the denoiser to mimic auto-encoding the target (note that the original regularization in the MoDL scheme is to do auto-encoding).}

\begin{figure}[hbt!]
\vspace{-0.01in}
\centering
\hspace{-0.05 in}
\includegraphics[width=0.5\textwidth]{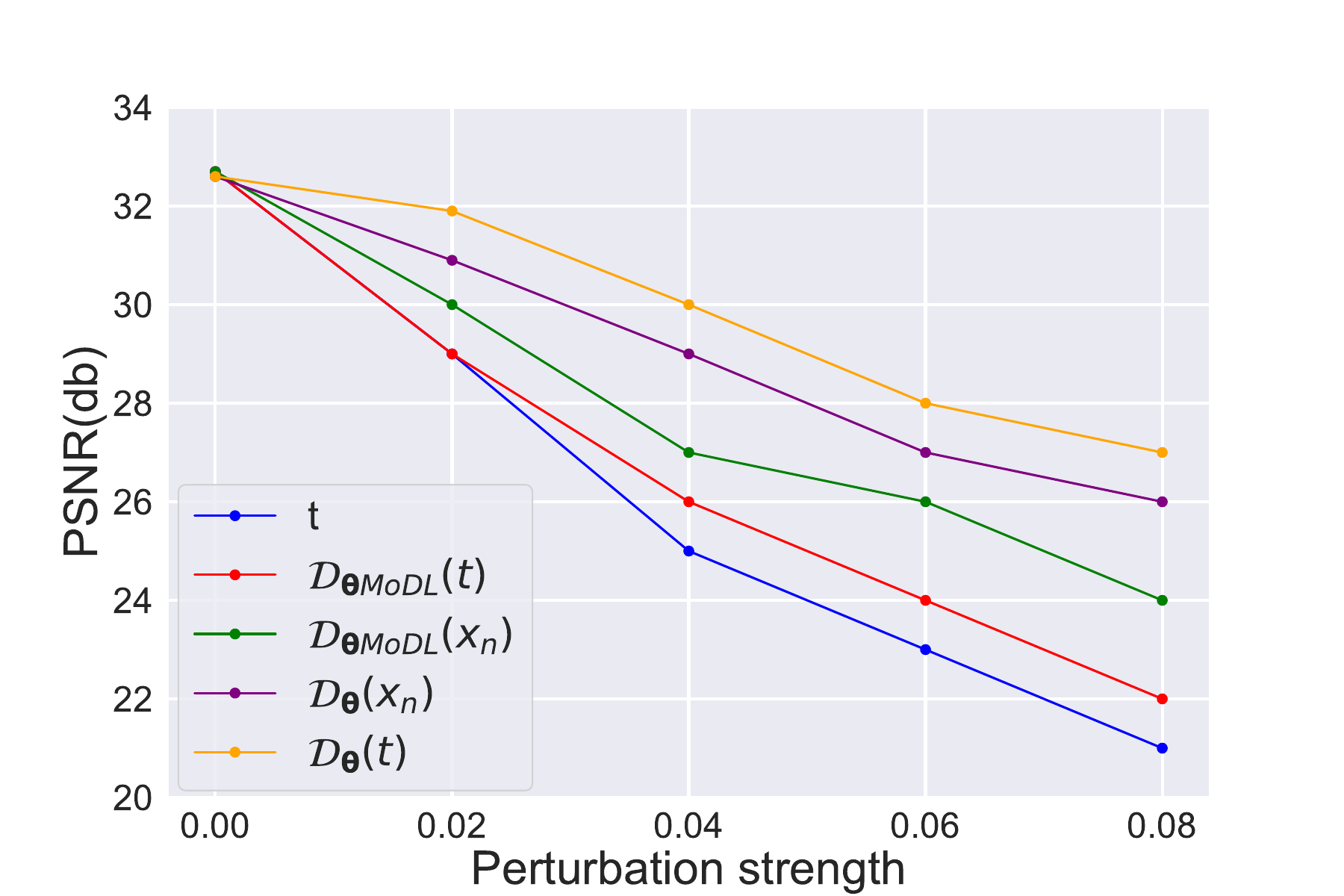}
\vspace{-0.1in}
    \caption{{PSNR vs. worst-case perturbation strength ($\epsilon)$ for {\us} for different configurations of   UStab loss \eqref{eq: unrolling loss}. 
    }}
    \label{fig: unrolling_loss}
\vspace{-0.1in}
\end{figure}
\begin{figure}[htb]
\centering
\begin{tabular}{cc}
  \hspace*{-0.15in}  
  \includegraphics[width=0.25\textwidth]{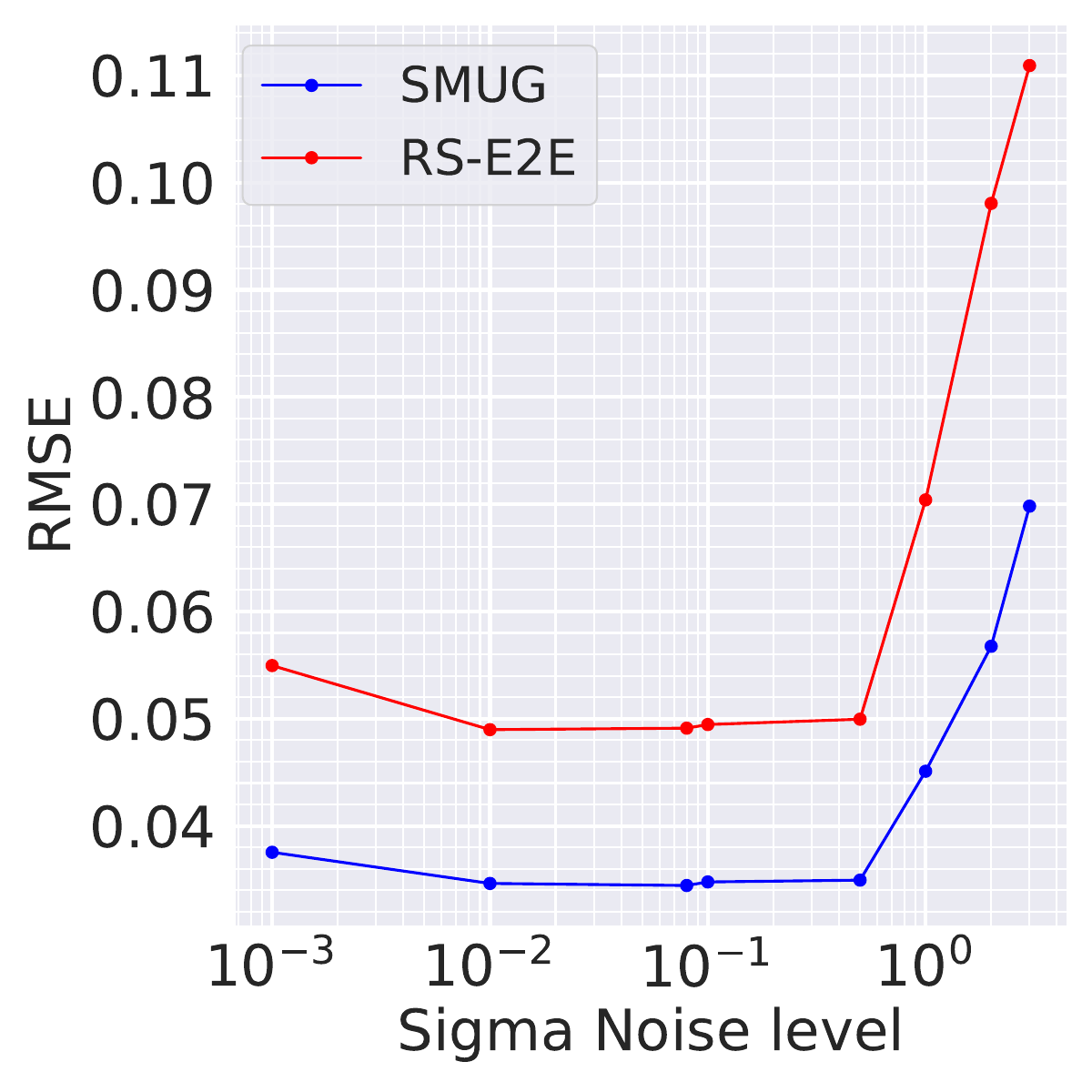}   &    \hspace*{-5mm}\includegraphics[width=0.25\textwidth]{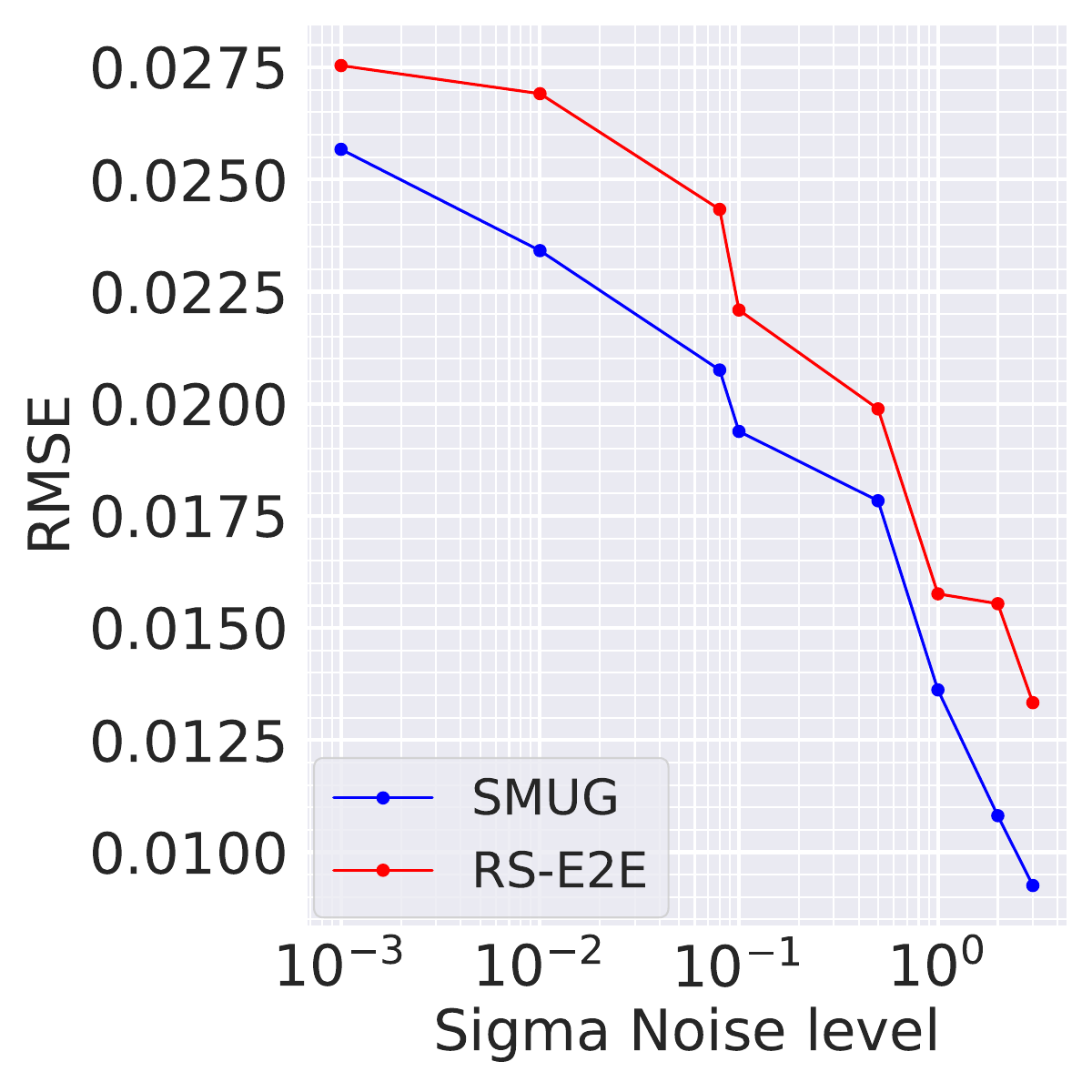}
  \vspace{-0.1cm}
\end{tabular}
    \caption{{Left: Norm of difference between SMUG and RS-E2E reconstructions and the ground truth for different choices of $\sigma$ in the smoothing process. A worst-case PGD perturbation $\boldsymbol\delta$ computed at $\epsilon=0.01$ was added to the measurements in all cases.
    Right: Robustness error for SMUG and RS-E2E at various $\sigma$, i.e., norm of difference between output with the perturbation $\boldsymbol\delta$ and without it.
    }}
    \label{fig: reference_PSNR}
  \vspace*{-0.2cm}
\end{figure}
\begin{figure}[!t]
\vspace{-0.01in}
\centering
\includegraphics[width=0.45\textwidth]{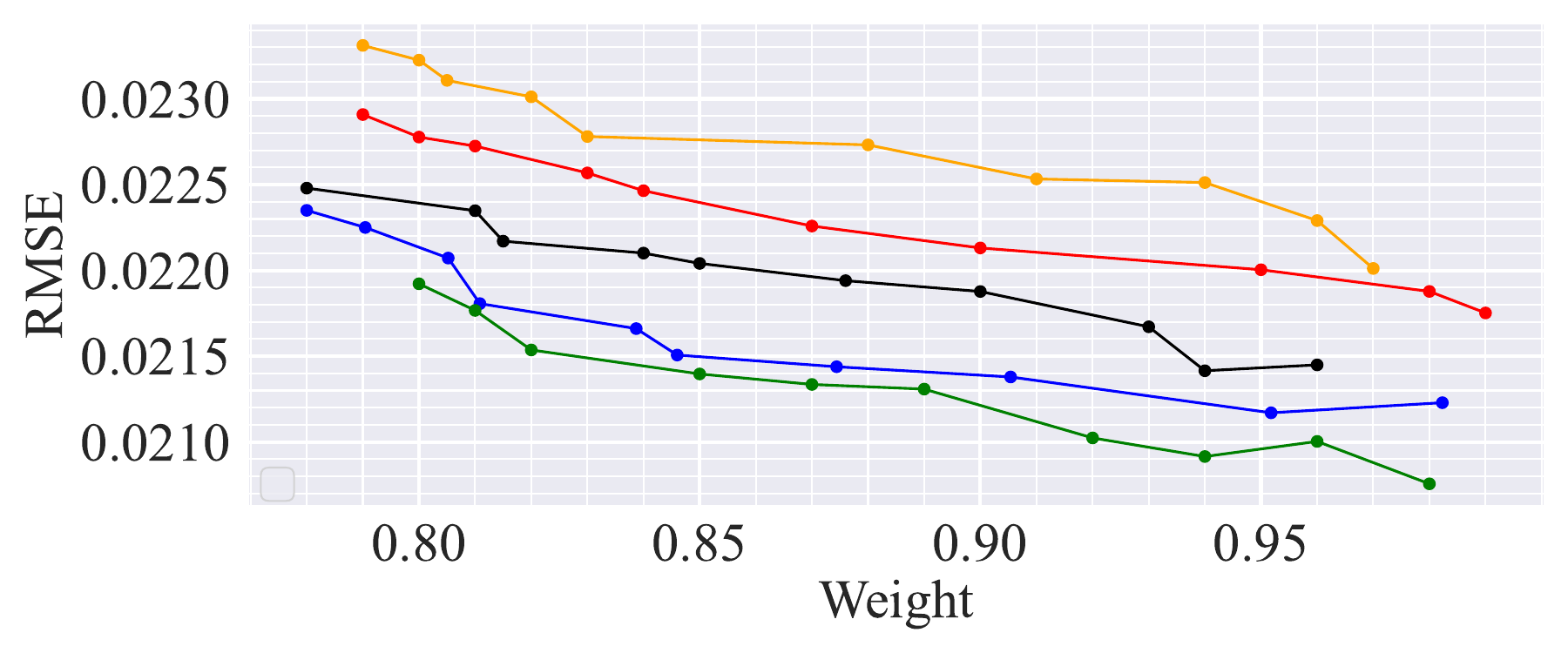}
\caption{Weights predicted by the weight encoder network in Weighted SMUG (from final layer of unrolling) plotted against root mean squared error (RMSE) of the corresponding denoised images for $5$ randomly selected scans (with 4x undersampling).}
\label{weight check}
\vspace{-0.2in}
\end{figure}

\noindent \textbf{Impact of the Noise Smoothing:} To comprehensively assess the influence of the introduced noise during smoothing, denoted as $\boldsymbol{\eta}$,
on the efficacy of the suggested approaches, we undertake an experiment involving varying noise standard deviations $\sigma$. 
The outcomes, documented in terms of RMSE, are showcased in $\textbf{Fig.} \ref{fig: reference_PSNR}$. 
The accuracy (reconstruction quality w.r.t. ground truth) and robustness error (error between with and without measurement perturbation cases) are shown for both {\us} and RS-E2E.
We notice a notable trend: as the noise level $\sigma$ increases, the accuracy for both methods improves before beginning to degrade. 
Importantly, {\us} consistently outperforms end-to-end smoothing.
Furthermore, the robustness error continually drops as $\sigma$ increases (corroborating with our analysis/bound in Section~\ref{method}), with more rapid decrease for {\us}.

\noindent \textbf{Empirical Analysis of the behavior of Weighted {\us}:} In \textbf{Fig. \ref{weight check}}, we analyze the behavior of the Weighted {\us} algorithm.
We delve into the nuances of weighted smoothing, which can assign different weights to different images during the smoothing process. The aim is to gauge how the superior performance of Weighted {\us} arises from the variations in learned weights.
Our findings indicate that among the $10$ Monte Carlo samplings implemented for the smoothing operation, those with lower denoising RMSE when compared to the ground truth images generally receive higher weights.

\vspace{0.1in}
\noindent \textbf{Computational Cost Analysis for SMUG and Weighted SMUG:} 
\textcolor{black}{Here, we do a computational cost analysis of the different smoothing-based methods to shed light on trade-offs.
At inference time, the proposed schemes involve randomized smoothing based denoising with a neural network and data consistency operations to enforce measurement priors. 
Let us assume the unrolled MoDL architecture. If we assume a denoising neural network with width \(M\) and depth \(L\), the cost for making a forward pass through it is \( \mathcal{O}(L M^2) \). In multi-coil MRI reconstruction, the conjugate gradient (CG) method iteratively solves the linear system~\eqref{modleqn1} involving the forward operator \(\mathbf{A}\), which consists of coil sensitivity weighting and an undersampled Fourier transform. Each iteration of CG involves applying \(\mathbf{A}\) and its adjoint \(\mathbf{A}^H\), requiring \( \mathcal{O}(N_c  n \log(n)) \) operations each, where \(N_c\) denotes the number of MR receiver coils, and \(n\) is the number of image pixels or voxels. The total complexity depends on the number of iterations in CG~\cite{greenbaum1997iterative}, typically 
\( k = 0.5\,\sqrt{\kappa(\mathbf{W})}\,\log\Bigl(\|\mathbf{e}_0\|_\mathbf{W}\,\epsilon^{-1}\Bigr) \), 
where \(\kappa(\cdot)\) denotes the condition number, the error tolerance achieved is \(2\epsilon > 0\), \(\mathbf{W} = \mathbf{A}^H \mathbf{A} + \lambda \mathbf{I}\), and \(\mathbf{e}_0\) denotes the initial iterate's error~\cite{greenbaum1997iterative}. The overall cost for the CG step is \( \mathcal{O}(k  N_c  n \log(n)) \). In practice, we observed only a few CG steps suffice. 
}

\textcolor{black}{For both end-to-end randomized Smoothing (RS-E2E) and SMUG, the overall computational cost is dependent on \(T\), the number of random noise samples used in the smoothing process. The computational cost for RS-E2E scales as \( \mathcal{O}(N  T  L  M^2) \) and \( \mathcal{O}(N  T  k  N_c  n \log(n)) \) for all the denoiser and CG steps (we assume a fixed $k$ for standard CG for simplicity), where \(N\) is the number of unrolling steps or iterations. 
The averaging of reconstructions (with different random noise perturbations) at the output of RS-E2E involves only \( \mathcal{O}(T  n) \) operations. 
On the other hand, for the proposed SMUG, the smoothing is performed in every step of denoising with the neural network. The total computational costs at inference time for the network forward passes, smoothing/averaging step, and CG step over \(N\) unrollings are \( \mathcal{O}(N  T  L  M^2) \), \( \mathcal{O}(N T  n) \), and \( \mathcal{O}(N  k  N_c  n \log(n)) \), respectively. 
Since the CG step usually takes longer than denoising and smoothing in practice, the cost for SMUG could be lower than RS-E2E since it does not apply CG for multiple noise-perturbed inputs. The costs for Weighted SMUG scales similarly as SMUG except for one more forward pass of noise-perturbed images through the weight prediction network.}

\textcolor{black}{Since the key difference between the typical unrolled network and SMUG is the iteration-wise randomized smoothing operation, we also performed an ablation study to show the effect of the number of noise samples used in the smoothing on overall image quality and runtime. \textbf{Fig.}~\ref{fig: noisy sample effect} demonstrates a clear trend, where an increase in the number of noise samples leads to a corresponding rise in reconstruction time, while the PSNR saturates after some point (by $10-12$ noise samples used for smoothing). This observation highlights the trade-off between utilizing more noise realizations for potentially improved reconstruction quality and the increased computational cost incurred at test time.}
\begin{figure}[!t]
\vspace{-0.01in}
\centering
\includegraphics[width=0.45\textwidth]{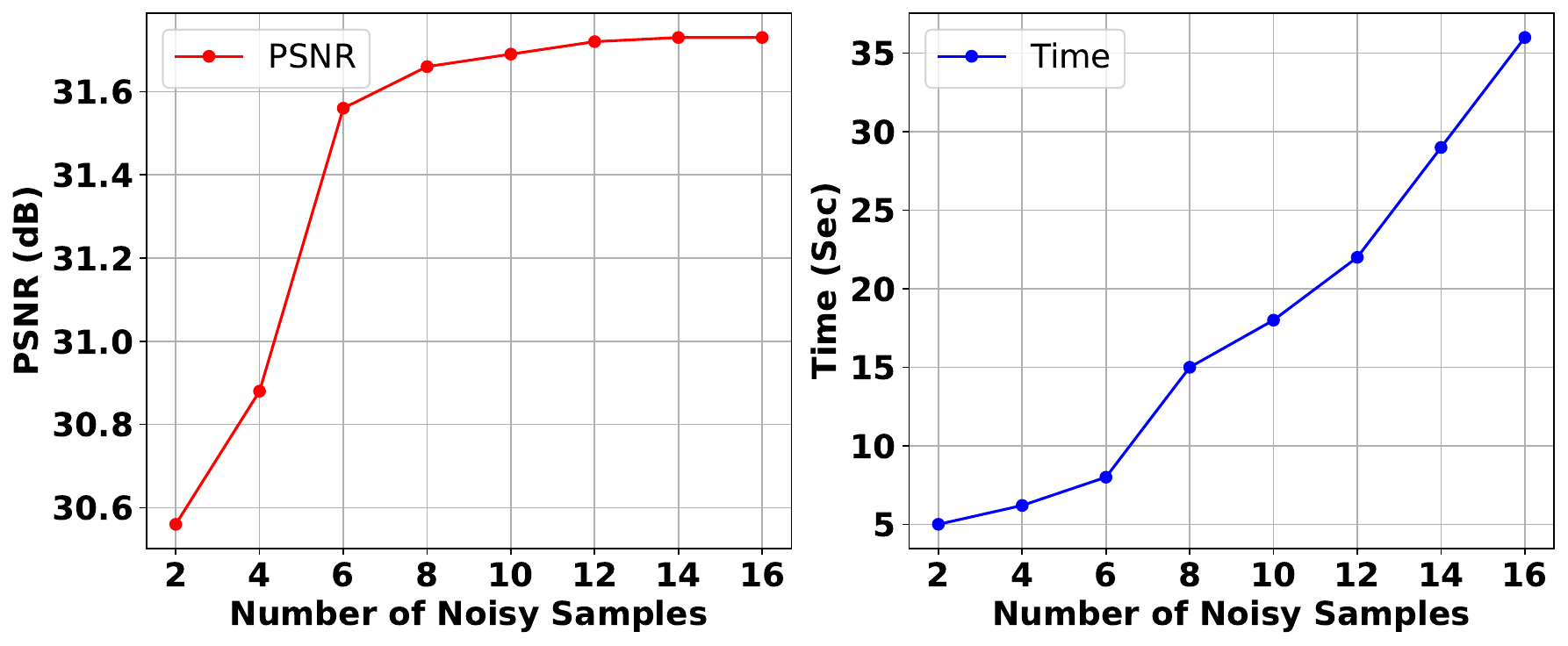}
\caption{\textcolor{black}{Runtime and PSNR trade-offs when varying the number of noisy samples during smoothing operations in SMUG. The results were averaged over $10$ randomly selected scans (with 4x undersampling).}}
\label{fig: noisy sample effect}
\vspace{-0.2in}
\end{figure}

\vspace{-0.1 in}
\subsection{Integrability of SMUG and Weighted SMUG in Other Unrolled Networks}  
\label{istanetresults}  

\textcolor{black}{In our concluding study, we explore whether our robustification methods maintain their effectiveness when applied to alternative unrolling techniques, specifically ISTA-Net~\cite{zhang2018istanet} and E2E-VarNet~\cite{sriram2020end}. While our experiments demonstrate promising results, we want to clarify that we do not claim SMUG or Weighted SMUG to be universally applicable for all unrolled networks. Instead, our goal is to establish its adaptability and effectiveness when integrated with some well-known unrolling-based architectures to further validate the robustness and generalizability of our methods.}  


\noindent \textbf{Applying Our Method to ISTA-Net:}  
For ISTA-Net, we adopted the default network architecture, utilizing the ADAM optimizer with a learning rate of $10^{-4}$. The network was configured with nine phases (unrolling iterations) and trained on the fastMRI knee dataset, which comprises 3,000 scans at a 4× undersampling rate. The training was conducted over 100 epochs to ensure adequate convergence. Consistent with our prior experimental setup, we used 64 scans for testing. All other training configurations for the vanilla ISTA-Net were set to their default values\footnote{\url{https://github.com/jianzhangcs/ISTA-Net-PyTorch}}, while the settings for the RS-E2E version, as well as the SMUG and Weighted SMUG variants of ISTA-Net, were aligned with those used in the MoDL experiments to facilitate a fair comparison.  

Figure~\ref{fig:box_plot_knee_ISTA_net} presents the performance evaluation, illustrating that both SMUG and Weighted SMUG versions of ISTA-Net achieve clean accuracy results comparable to the standard ISTA-Net. However, the key advantage of our method becomes evident in more challenging scenarios. Under conditions of random noise perturbation (Gaussian noise with $\sigma=0.01$) and adversarial interference from a PGD attack (30 steps with $\epsilon = 0.02$), our method demonstrates superior robustness. Specifically, both SMUG and Weighted SMUG outperform the original ISTA-Net as well as the RS-E2E variant, exhibiting improved resilience against noise and adversarial perturbations. These findings closely mirror the patterns observed when unrolling smoothing was applied to the MoDL network, reinforcing the efficacy of our approach across different architectures.  

\textcolor{black}{\noindent \textbf{Applying Our Method to E2E-VarNet:}  
To further assess the integrability of our method, we applied SMUG and Weighted SMUG to E2E-VarNet. In this case, we utilized the default architecture, consisting of 12 cascades (iterations of network refinement steps or unrolling steps). The network was optimized using ADAM with a learning rate of $3 \times 10^{-4}$. Other than this adjustment, most of the training settings for RS-E2E, as well as the SMUG and Weighted SMUG variants, remained consistent with those used for ISTA-Net to maintain comparability between experiments.}

\textcolor{black}{Our results reveal a strikingly similar trend to that observed in the ISTA-Net experiments. Specifically, while the clean performance of the SMUG and more so Weighted SMUG versions of E2E-VarNet remain on par with vanilla E2E-VarNet, their robustness under noisy and adversarial conditions significantly surpasses that of the standard model. The detailed performance comparisons are presented in Table~\ref{tab: exp_smoothing}, further underscoring the effectiveness of our approach in enhancing network resilience. }

\textcolor{black}{Overall, our experiments with ISTA-Net and E2E-VarNet provide additional evidence that SMUG and Weighted SMUG can be successfully integrated into different unrolled network architectures. These findings highlight the generalizability of our method and suggest its potential for improving robustness in image reconstruction tasks.}

\begin{figure*}
    \centering
    \includegraphics[width=0.88\textwidth]{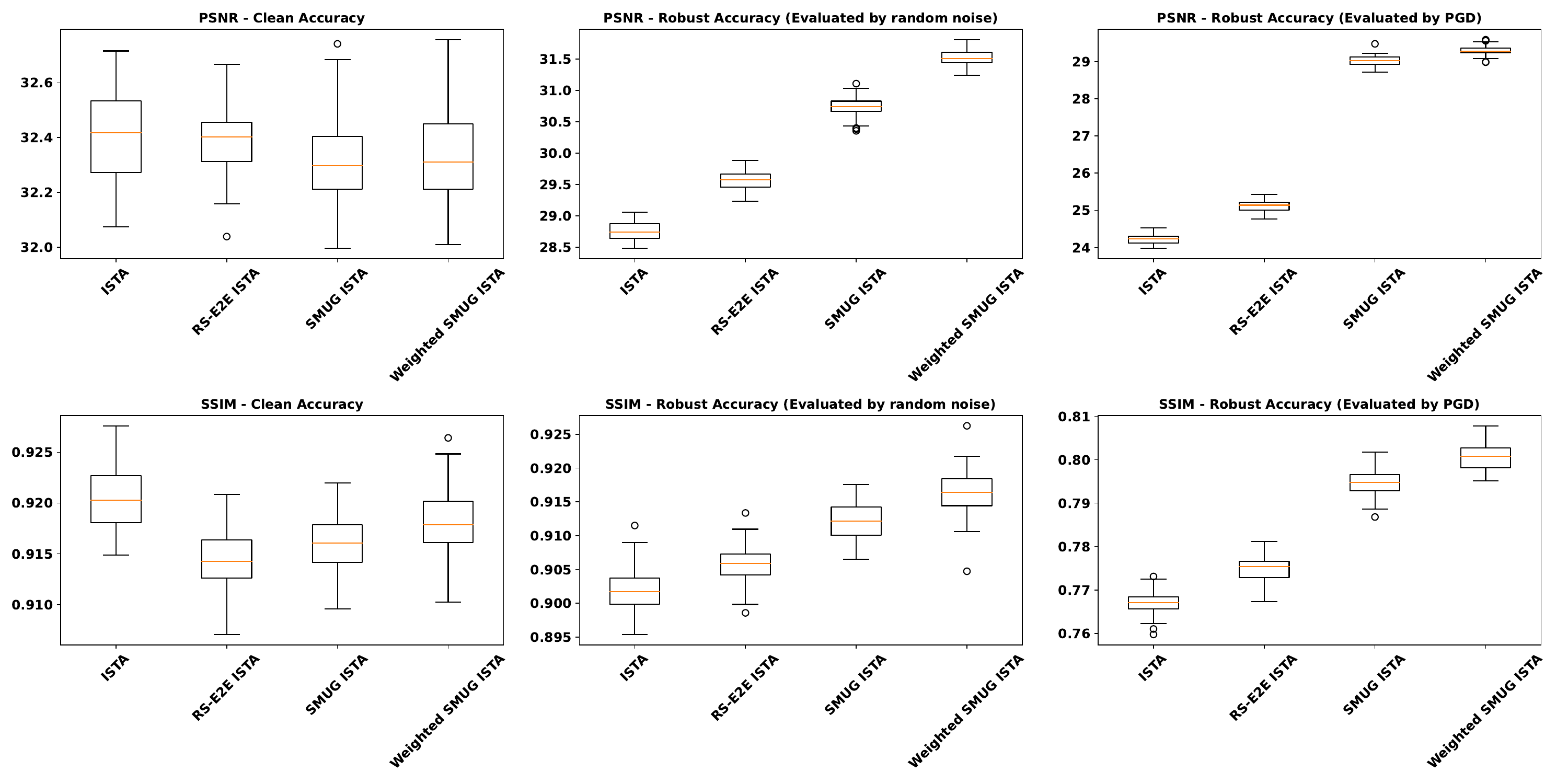}
    \vspace{-0.35cm}
    \caption{{ \textcolor{black}{Reconstruction accuracy box plots for the fastMRI \textbf{knee} dataset with 4x acceleration factor for the case of \textbf{ISTA-Net}.} \textcolor{black}{The additive random Gaussian  noise in the second column plots is obtained using a standard deviation of $0.01$. The worst-case additive noise in the third  column is obtained using the PGD  method with $\epsilon = 0.02$}.}}
    \label{fig:box_plot_knee_ISTA_net}
    \vspace{-0.01 in}
\end{figure*}


 \begin{table}[htb]
\centering
\caption{{
\textcolor{black}{Accuracy performance of different smoothing architectures \ref{eq: denoised smoothing mri}, {\us}, and Weighted {\us} together with the vanilla E2E-VarNet. Here `Clean Accuracy', `Noise Accuracy', and `Robust Accuracy' refer to PSNR/SSIM evaluated on benign data, random noise-injected data, and PGD attack-enabled adversarial data, respectively.
$\uparrow$ indicates that a higher number is a better reconstruction accuracy. The result $a${\tiny{$\pm b$}} represents mean $a$ and standard deviation $b$ over {64} testing images.
}}}
\label{tab: exp_smoothing}
\vspace*{-2mm}
\resizebox{0.48\textwidth}{!}{%
\begin{tabular}{c|cc|cc|cc}
\toprule[1pt]
\midrule
Models 
& \multicolumn{2}{c}{Clean Accuracy} 
& \multicolumn{2}{c}{Noise Accuracy} 
& \multicolumn{2}{c}{Robust Accuracy} 
\\
Metrics 
& PSNR \textcolor{red}{$\uparrow$} & SSIM \textcolor{red}{$\uparrow$}
& PSNR \textcolor{red}{$\uparrow$} & SSIM \textcolor{red}{$\uparrow$}
& PSNR \textcolor{red}{$\uparrow$} & SSIM \textcolor{red}{$\uparrow$}
\\
\midrule
Vanilla E2E-VarNet
& 32.83\footnotesize{$\pm$0.24}
& 0.912\footnotesize{$\pm$0.05}
& 30.15\footnotesize{$\pm$0.37}
& 0.882\footnotesize{$\pm$0.07}
& 23.78\footnotesize{$\pm$0.52}
& 0.742\footnotesize{$\pm$0.07}
\\
\midrule
RS-E2E
& 32.58\footnotesize{$\pm$0.37}
& 0.904\footnotesize{$\pm$0.03}
& 30.67\footnotesize{$\pm$0.42}
& 0.889\footnotesize{$\pm$0.04}
& 24.56\footnotesize{$\pm$0.32}
& 0.771\footnotesize{$\pm$0.08}
\\
{\us}  
& 32.64\footnotesize{$\pm$0.27}
& 0.907\footnotesize{$\pm$0.08}
& 31.24\footnotesize{$\pm$0.39}
& 0.895\footnotesize{$\pm$0.04}
& 27.85\footnotesize{$\pm$0.38}
& 0.821\footnotesize{$\pm$0.056}
\\
\rowcolor[gray]{.8}
Weighted {\us}
& 32.78\footnotesize{$\pm$0.34}
& 0.909\footnotesize{$\pm$0.067}
& 31.43\footnotesize{$\pm$0.44}
& 0.899\footnotesize{$\pm$0.05}
& 28.26\footnotesize{$\pm$0.41}
& 0.831\footnotesize{$\pm$0.067}
\\
\midrule
\bottomrule[1pt]
\end{tabular}%
}
\vspace*{-4mm}
\end{table}

\section{Discussion and Conclusion}
\label{sec:conclusion}
In this work, we proposed a scheme for improving the robustness of DL-based MRI reconstruction. In particular, we investigated deep unrolled reconstruction's 
weaknesses in robustness against worst-case or noise-like additive perturbations, sampling rates, and unrolling steps. To improve the robustness of the unrolled scheme, we proposed {\us} with a novel unrolled smoothing loss. 
We also provided a theoretical analysis on the robustness achieved by our proposed method integrated into MoDL.
 Compared to the vanilla
{\modl} approach and other schemes,
we empirically showed that our approach is effective and can significantly improve the robustness of 
a deep unrolled scheme
against a diverse set of external perturbations.
We also further improved {\us}'s robustness by introducing weighted smoothing as an alternative to conventional RS, which adaptively weights different images when aggregating them. \textcolor{black}{While we applied the proposed smoothing schemes to several unrolled deep image reconstruction models such as MoDL, ISTA-Net, and VarNet, we hope to study applicability to other deep network models in future work.}
\textcolor{black}{We also plan to apply the proposed schemes to other imaging modalities and evaluate robustness against additional types of realistic perturbations. While we theoretically characterized the robustness error for SMUG, we hope to further analyze its accuracy-robustness trade-off with perturbations.}


\appendices

\section{Proof of Theorem 1}
\label{sec: Appendix}

\subsection{Preliminary of Theorem 1}
\begin{lemma}
Let  $f : \mathbb{R}^d \to \mathbb{R}^m $ be any bounded function. Let \( \boldsymbol\eta \sim \mathcal{N}(0, \sigma^2 \mathbf{I}) \). We define \( g : \mathbb{R}^d \to \mathbb{R}^m \) as
\[ g(\x) = \mathbb{E}_{\boldsymbol\eta} [f(\x + \boldsymbol\eta)]. \]
Then, \( g \) is an \( \frac{M}{\sqrt{2\pi}\sigma} \)-Lipschitz map, where \( M = 2\max_{\x\in\R^d} (\|f(\x)\|_2)\). In particular, for any \( \x, \boldsymbol\delta \in \mathbb{R}^d \):
\[ \|g(\x) - g(\x + \boldsymbol \delta)\|_2 \leq  \frac{M}{\sqrt{2 \pi}\sigma} \|\boldsymbol\delta\|_2. \]
\end{lemma}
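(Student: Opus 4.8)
The plan is to exploit the fact that, although $f$ is merely bounded and measurable, its Gaussian average $g$ is smooth, so I can control its Lipschitz constant by bounding the operator norm of the Jacobian $\nabla g(\x)$ uniformly in $\x$ and then integrating along the segment joining $\x$ and $\x+\boldsymbol\delta$. First I would write $g$ as the convolution $g(\x)=\int_{\R^d} f(\mathbf t)\,\phi_\sigma(\x-\mathbf t)\,d\mathbf t$, where $\phi_\sigma$ is the $\mathcal N(\mathbf 0,\sigma^2\mathbf I)$ density. Since $\phi_\sigma$ is $C^\infty$ and $f$ is bounded, the dominating estimate $\int \|f(\mathbf t)\|_2\,\|\nabla_{\x}\phi_\sigma(\x-\mathbf t)\|\,d\mathbf t \le (M/2)\int\|\nabla\phi_\sigma\| < \infty$ justifies differentiating under the integral sign, so $g$ is differentiable everywhere.

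Next I would compute the Jacobian using the identity $\nabla_{\x}\phi_\sigma(\x-\mathbf t)=-\sigma^{-2}(\x-\mathbf t)\phi_\sigma(\x-\mathbf t)$ (equivalently, Stein's lemma for the Gaussian), which after the substitution $\boldsymbol\eta=\mathbf t-\x$ gives
\[
\nabla g(\x)=\frac{1}{\sigma^2}\,\mathbb{E}_{\boldsymbol\eta}\!\left[f(\x+\boldsymbol\eta)\,\boldsymbol\eta^{\!\top}\right].
\]
To bound the spectral norm I would test against an arbitrary unit vector $\mathbf v\in\R^d$; since $\boldsymbol\eta^{\!\top}\mathbf v\sim\mathcal N(0,\sigma^2)$,
\[
\|\nabla g(\x)\,\mathbf v\|_2 \le \frac{1}{\sigma^2}\,\mathbb{E}_{\boldsymbol\eta}\!\left[\|f(\x+\boldsymbol\eta)\|_2\,|\boldsymbol\eta^{\!\top}\mathbf v|\right] \le \frac{M}{2\sigma^2}\,\mathbb{E}\big[|\boldsymbol\eta^{\!\top}\mathbf v|\big].
\]
Using the half-normal first moment $\mathbb{E}\,|\boldsymbol\eta^{\!\top}\mathbf v| = \sigma\sqrt{2/\pi}$, the right-hand side collapses to exactly $\tfrac{M}{\sqrt{2\pi}\sigma}$, so $\|\nabla g(\x)\|_2\le \tfrac{M}{\sqrt{2\pi}\sigma}$ for all $\x$. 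The claimed Lipschitz bound then follows from the fundamental theorem of calculus, $g(\x+\boldsymbol\delta)-g(\x)=\int_0^1 \nabla g(\x+s\boldsymbol\delta)\,\boldsymbol\delta\,ds$, together with the triangle inequality.

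I expect the only genuinely delicate points to be bookkeeping rather than conceptual: rigorously justifying the interchange of differentiation and integration (handled by dominated convergence, using boundedness of $f$ and integrability of $\nabla\phi_\sigma$), and noting why the crude pointwise bound $\|f(\x+\boldsymbol\eta)\|_2\le M/2$ already yields the stated constant. The factor of $2$ in $M=2\max_\x\|f(\x)\|_2$ is precisely the slack absorbed at this step, and it could be tightened by recentering, i.e., replacing $f$ by $f-\mathbf c$, which leaves $\nabla g$ unchanged since $\mathbb{E}[\boldsymbol\eta^{\!\top}\mathbf v]=0$. Everything else reduces to the evaluation of a single one-dimensional Gaussian absolute moment.
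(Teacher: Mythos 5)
Your proof is correct, but it takes a genuinely different route from the paper's. The paper works directly with the integral representation: after a change of variables it writes $g(\x)-g(\x+\boldsymbol\delta)=\int f(\boldsymbol w)\,[\mu(\boldsymbol w-\x)-\mu(\boldsymbol w-\x-\boldsymbol\delta)]\,d\boldsymbol w$, applies H\"older's inequality to pull out $\max_{\x}\|f(\x)\|_2$, and then invokes a cited bound on the $L^1$ (total-variation) distance between two Gaussians with shifted means, $\int |\mu(\boldsymbol w-\x)-\mu(\boldsymbol w-\x-\boldsymbol\delta)|\,d\boldsymbol w\le \tfrac{2}{\sqrt{2\pi}\sigma}\|\boldsymbol\delta\|_2$, to arrive at the same constant. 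You instead differentiate under the integral sign, use the Gaussian score identity to get $\nabla g(\x)=\sigma^{-2}\,\mathbb{E}[f(\x+\boldsymbol\eta)\boldsymbol\eta^{\top}]$, bound the operator norm by the half-normal moment $\mathbb{E}|\boldsymbol\eta^{\top}\mathbf v|=\sigma\sqrt{2/\pi}$, and integrate along the segment; the arithmetic checks out and reproduces $\tfrac{M}{\sqrt{2\pi}\sigma}$ exactly. The paper's argument requires no differentiability discussion and no interchange-of-limits justification, but leans on an external lemma for the Gaussian shift bound; yours is self-contained modulo a one-dimensional Gaussian moment, proves the slightly stronger statement that $g$ is everywhere differentiable with uniformly bounded Jacobian, and makes transparent the observation that the constant can be tightened by recentering $f$ (since $\mathbb{E}[\boldsymbol\eta]=\mathbf 0$ leaves $\nabla g$ invariant under $f\mapsto f-\mathbf c$) --- a slack that is less visible in the total-variation argument. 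Both are valid and yield identical conclusions.
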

\begin{proof}
The proof of this bound follows recent work~\cite{wolfmaking}, with a modification on $M$.
Let $\mu$ be the probability distribution function of random variable $\boldsymbol\eta$. By the change of variables $\boldsymbol w=\x+\boldsymbol\eta$
and $\boldsymbol w=\x+\boldsymbol\eta + \boldsymbol\delta$ for the integrals constituting $g(\x)$ and $g(\x+\boldsymbol \delta)$, 
we have $\|g(\x)-g(\x+\boldsymbol \delta)\|_2=\|\int_{\mathbb{R}^d} f(\boldsymbol w)[\mu(\boldsymbol w-\x)-\mu(\boldsymbol w-\x-\boldsymbol \delta)]\ d\boldsymbol w\|_2$. Then, we have $\|g(\x)-g(\x+\boldsymbol\delta)\|_2$
\begin{align*}
    \leq\int_{\mathbb{R}^d} \|f(\boldsymbol w)[\mu(\boldsymbol w-\x)-\mu(\boldsymbol w-\x-\boldsymbol \delta)]\|_2\ d\boldsymbol w,
\end{align*}
which is a standard result for the norm of an integral.
We further apply Holder's inequality to upper bound $\|g(\x)-g(\x+\boldsymbol\delta)\|_2$ with
\begin{align}
    \max_{\x\in\R^d}(\|f(\x)\|_2) \int_{\R^d}|\mu(\boldsymbol w-\x)-\mu(\boldsymbol w-\x-\boldsymbol \delta)\ |d\boldsymbol w.
\end{align}
Observe that $\mu(\boldsymbol w -\x)\geq \mu(\boldsymbol w -\x-\boldsymbol\delta)$ if $\|\boldsymbol w -\x\|_2\leq \|\boldsymbol w-\x-\boldsymbol\delta\|_2$. Let $D = \{\boldsymbol w: \|\boldsymbol w -\x\|_2\leq \|\boldsymbol w-\x-\boldsymbol\delta\|_2\}$. Then, we can rewrite the above bound as
\begin{align}
    &=\max_{\x\in\R^d}(\|f(\x)\|_2)\cdot2\int_D [\mu(\boldsymbol w-\x)-\mu(\boldsymbol w-\x-\boldsymbol \delta)]\ d\boldsymbol{w}\\
    &= \frac{M}{2}
    \begin{pmatrix}
2\int_D\mu(\boldsymbol w-\x)\ d\boldsymbol{w}-2\int_D\mu(\boldsymbol w-\x-\boldsymbol \delta)\ d\boldsymbol{w}.
\end{pmatrix}
\end{align}
Following Lemma~3 in \cite{lakshmanan2008decentralized}, we obtain the bound
\begin{align}
     2\int_D\mu(\boldsymbol w-\x)\ d\boldsymbol{w}-2\int_D\mu(\boldsymbol w-\x-\boldsymbol \delta)\ d\boldsymbol{w}\leq \frac{2}{\sqrt{2\pi}\sigma}\|\boldsymbol\delta\|_2\:,
\end{align}
which implies that 
    $\|g(\x)-g(\x+\boldsymbol \delta)\|_2\leq \frac{2\max_{\x\in\R^d}(\|f(\x)\|_2)}{\sqrt{2\pi}\sigma}\|\boldsymbol\delta\|_2 = \frac{M}{\sqrt{2\pi}\sigma}\|\boldsymbol\delta\|_2$. This completes the proof. 
\end{proof}
\mycomment{
The proof of this bound  closely follows \cite{wolfmaking}, with a correction on $M$.
We have that
\begin{align*}
    &||g(\x)-g(\x+\boldsymbol \delta)||\\
    &=||\int_{\mathbb{R}^d} f(\boldsymbol w)[\mu(\boldsymbol w-\x)-\mu(\boldsymbol w-\x-\boldsymbol \delta]dw||\\
    &\leq ||\int_{D^+} f(w) [\mu(\boldsymbol w - \x) - \mu(\boldsymbol w - \x - \boldsymbol \delta)] dw|| \\
    &+ ||\int_{D^-} f(\boldsymbol w) [\mu(\boldsymbol w - \x - \boldsymbol \delta) + \mu(\boldsymbol w - \x)] dw||
\end{align*}
where $D^+ = \{\boldsymbol w : \mu(\boldsymbol w - \x) > \mu(\boldsymbol w - \x - \boldsymbol \delta)\} = \{\boldsymbol w : \|\boldsymbol w - \x\|^2 < \|\boldsymbol w - \x - \boldsymbol \delta\|^2\}$ and
$D^- = \{\boldsymbol w : \mu(\boldsymbol w - \x) < \mu(\boldsymbol w - \x - \boldsymbol \delta)\} = \{\boldsymbol w : \|\boldsymbol w - \x\|^2 > \|\boldsymbol w - \x - \boldsymbol \delta\|^2\}$. We notice that \[\int_{D^+} [\mu(\boldsymbol w - \x) - \mu(\boldsymbol w - \x - \boldsymbol \delta)] dw = \int_{D^-} [\mu(\boldsymbol w - \x - \boldsymbol \delta) - \mu(\boldsymbol w - \x)] dw.\] 
Now we use Jensen's inequality on each norm on the right hand side to get
\begin{align*}
    &||g(\x)-g(\x+ \boldsymbol \delta)||\\
    &\leq  \int_{D^+}|| f(\boldsymbol w) [\mu(\boldsymbol w - \x) - \mu(\boldsymbol w - \x - \boldsymbol \delta)]|| dw \\
    &+ \int_{D^-} ||f(\boldsymbol w) [\mu(\boldsymbol w - \x-\boldsymbol \delta) + \mu(\boldsymbol w - \x)]||dw
\end{align*}
Now we apply Holder's inequality to get
\begin{align*}
    &\leq \max(f) \cdot(\int_{D^+}||\mu(w - x)- \mu(\boldsymbol w - \x - \boldsymbol \delta)||dw \\
    &+\int_{D^-}||\mu(\boldsymbol w - \x-\boldsymbol \delta) - \mu(\w - \boldsymbol x)||dw)\\
    &= 2\max(f) \cdot(\int_{D^+}\mu(\boldsymbol w - \x) - \mu(\boldsymbol w - \x - \boldsymbol \delta)dw)
\end{align*}
}

\subsection{Proof of Theorem 1}

\begin{proof}
Assume that the data consistency step in MoDL at iteration $n$ is denoted by $\mathbf{x}^{n}_{\text{M}}(\mathbf{A}^H \mathbf{y})$. 
We will sometimes drop the input and $\mathbf{y}$ dependence for notational simplicity.
Then
    \begin{align}
    &\mathbf{x}^1_{\text{M}} = (\mathbf{A}^H\mathbf{A}+\mathbf{I})^{-1}(\mathbf{A}^H\mathbf{y}+\mathcal{D}_{\boldsymbol\theta}(\mathbf{A}^H\mathbf{y}))\:,\\
    &\mathbf{x}^n_{\text{M}} = (\mathbf{A}^H\mathbf{A}+\mathbf{I})^{-1}(\mathbf{A}^H\mathbf{y}+\mathcal{D}_{\boldsymbol\theta}(\mathbf{x}^{n-1}_{\text{M}})), 
    \end{align} 
    where $\mathcal{D}_{\boldsymbol\theta}$ is the denoiser function. For the sake of simplicity and consistency with the experiments, we use the weighting parameter $\lambda=1$ (in the data consistency step). We note that the proof works for arbitrary $\lambda$. 
    SMUG introduces an iteration-wise smoothing step into MoDL as follows:
    \begin{align}
    &\mathbf{x}_{\text{S}}^1 = ((\mathbf{A}^H\mathbf{A}+\mathbf{I})^{-1} (\mathbf{A}^H\mathbf{y} + \mathbb{E}_{\boldsymbol{\eta}_1}[  \mathcal{D}_{\boldsymbol\theta}(\mathbf{A}^H\mathbf{y} + \boldsymbol{\eta}_1) ])\\
    &\mathbf{x}_{\text{S}}^{n} = ((\mathbf{A}^H\mathbf{A}+\mathbf{I})^{-1} (\mathbf{A}^H\mathbf{y} + \mathbb{E}_{\boldsymbol\eta_{n}}[  \mathcal{D}_{\boldsymbol\theta}(\mathbf{x}_{\text{S}}^{n-1} + \boldsymbol\eta_{n})] ) \label{eq22}\\
    &=  (\mathbf{A}^H\mathbf{A} + \mathbf{I})^{-1} (\mathbf{A}^H \mathbf{y}) +\\ &(\mathbf{A}^H\mathbf{A} + \mathbf{I})^{-1}\mathbb{E}_{\boldsymbol\eta_n}[\mathcal{D}_{\boldsymbol\theta}( \mathbf{x}_{\text{S}}^{n-1}+\boldsymbol{\eta}_n)]  \notag,
    \end{align}
where we apply the expectation to the denoiser $\mathcal{D}_{\boldsymbol\theta}$ at each iteration. We use $\boldsymbol{\eta}_n$ to denote the noise during smoothing at iteration $n$.
The robustness error of SMUG after $n$ iterations is $\|\mathbf{x}_{\text{S}}^n(\mathbf{A}^H\mathbf{y})-\mathbf{x}_{\text{S}}^n(\mathbf{A}^H(\mathbf{y}+\boldsymbol{\delta}))\|$. We apply Lemma~1 and properties of the norm (e.g., triangle inequality) to bound $\|\mathbf{x}_{\text{S}}^n(\mathbf{A}^H\mathbf{y})-\mathbf{x}_{\text{S}}^n(\mathbf{A}^H(\mathbf{y}+\boldsymbol{\mathbf{\delta}}))\|$ as
\begin{align}
     &\leq \|(\mathbf{A}^H\mathbf{A}+\mathbf{I})^{-1}\mathbf{A}^H\boldsymbol\delta\|   \label{eq24} \\
     &+\|(\mathbf{A}^H\mathbf{A}+\mathbf{I})^{-1}\cdot\bigl(\E_{\boldsymbol\eta_n}[\mathcal{D}_{\boldsymbol\theta}\bigl(\mathbf{x}_{\text{S}}^{n-1}(\mathbf{A}^H\mathbf{y})+\boldsymbol\eta_n\bigr)]-\notag\\
     &\E_{\boldsymbol\eta_n}[\mathcal{D}_{\boldsymbol\theta}\bigl(\mathbf{x}_{\text{S}}^{n-1}(\mathbf{A}^H(\mathbf{y}+\boldsymbol\delta))+\boldsymbol\eta_n\bigr)]\bigr)\|\notag\\
       & \leq \|(\mathbf{A}^H\mathbf{A}+\mathbf{I})^{-1} \|_2 \|\mathbf{A}^H\boldsymbol\delta\|_2 \notag \\
     & +  \|(\mathbf{A}^H\mathbf{A}+\mathbf{I})^{-1}\|_2 \|\E_{\boldsymbol\eta_n}[\mathcal{D}_{\boldsymbol\theta}\bigl(\mathbf{x}_{\text{S}}^{n-1}(\mathbf{A}^H\mathbf{y})+\boldsymbol\eta_n\bigr)]-\notag\\
     &\E_{\boldsymbol\eta_n}[\mathcal{D}_{\boldsymbol\theta}\bigl(\mathbf{x}_{\text{S}}^{n-1}(\mathbf{A}^H(\mathbf{y}+\boldsymbol\delta))+\boldsymbol\eta_n\bigr)]\|\notag\\
     &\leq 
 \|(\mathbf{A}^H\mathbf{A}+\mathbf{I})^{-1} \|_2 \|\mathbf{A}^H\boldsymbol\delta\|_2 +  \|(\mathbf{A}^H\mathbf{A}+\mathbf{I})^{-1}\|_2 \times \\ \notag
     &
     \begin{pmatrix}
\frac{M}{\sqrt{2\pi}\sigma}
\end{pmatrix}
     \|\mathbf{x}_{\text{S}}^{n-1}(\mathbf{A}^H\mathbf{y})-\mathbf{x}_{\text{S}}^{n-1}(\mathbf{A}^H(\mathbf{y}+\boldsymbol{\mathbf{\delta}}))\|. 
\end{align}
Here, $M = 2\max_{\x}(\|\mathcal{D}_{\boldsymbol\theta}(\x)\|)$. 
Then we plug in the expressions for $\mathbf{x}_{\text{S}}^{n-1}(\mathbf{A}^H\mathbf{y})$ and $\mathbf{x}_{\text{S}}^{n-1}(\mathbf{A}^H(\mathbf{y}+\boldsymbol{\mathbf{\delta}}))$ (from \eqref{eq22}) and bound their normed difference with
 $\|(\mathbf{A}^H\mathbf{A}+\mathbf{I})^{-1}\mathbf{A}^H\boldsymbol\delta\| + 
\|(\mathbf{A}^H\mathbf{A}
 +\mathbf{I})^{-1}\cdot\bigl(\E_{\boldsymbol\eta_{n-1}}[\mathcal{D}_{\boldsymbol\theta}\bigl(\mathbf{x}_{\text{S}}^{n-2}(\mathbf{A}^H\mathbf{y})+\boldsymbol\eta_{n-1}\bigr)]-\E_{\boldsymbol\eta_{n-1}}[\mathcal{D}_{\boldsymbol\theta}\bigl(\mathbf{x}_{\text{S}}^{n-2}(\mathbf{A}^H(\mathbf{y}+\boldsymbol\delta))+\boldsymbol\eta_{n-1}\bigr)]\bigr)\|$. This is bounded above similarly as for \eqref{eq24}.
 We repeat this process until we reach the initial $\mathbf{x}_{\text{S}}^{0}$ on the right hand side. This yields the following bound involving a geometric series.
\begin{align}
& \|\mathbf{x}_{\text{S}}^n(\mathbf{A}^H\mathbf{y})-\mathbf{x}_{\text{S}}^n(\mathbf{A}^H(\mathbf{y}+\boldsymbol{\mathbf{\delta}}))\| \\
    &\leq \|\mathbf{A}^H\boldsymbol\delta\|_2 
\begin{pmatrix}
\sum_{j=1}^n \|(\mathbf{A}^H\mathbf{A}+\mathbf{I})^{-1}\|_2^j\cdot\begin{pmatrix}
\frac{M}{\sqrt{2\pi}\sigma}
\end{pmatrix}^{j-1}
\end{pmatrix} \notag\\
     & + \|(\mathbf{A}^H\mathbf{A}+\mathbf{I})^{-1}\|_2^n
    \begin{pmatrix}
\frac{M}{\sqrt{2\pi}\sigma}
\end{pmatrix}^{n}  \|\mathbf{A}^H\boldsymbol\delta\|_2 \\
& \hspace{-0.1in} \leq \|\mathbf{A}\|_2\|\boldsymbol\delta\|_2
\|(\mathbf{A}^H\mathbf{A}+\mathbf{I})^{-1}\|_2  \begin{pmatrix}
\frac{1-\begin{pmatrix}
\frac{M}{\sqrt{2 \pi}\sigma}
\end{pmatrix}^{n}  \|(\mathbf{A}^H\mathbf{A}+\mathbf{I})^{-1}\|_2^{n}  }{1 - \frac{M}{\sqrt{2 \pi}\sigma} \|(\mathbf{A}^H\mathbf{A}+\mathbf{I})^{-1}\|_2}
\end{pmatrix} \notag \\
     &+ \|(\mathbf{A}^H\mathbf{A}+\mathbf{I})^{-1}\|_2^n
    \begin{pmatrix}
\frac{M}{\sqrt{2\pi}\sigma}
\end{pmatrix}^{n}\|\mathbf{A}\|_2\|\boldsymbol\delta\|_2 \leq C_n \|\boldsymbol\delta\|_2,
\end{align}
where we used the geometric series formula, and $C_n = \alpha \|\mathbf{A}\|_2     \begin{pmatrix}
\frac{1-\begin{pmatrix}
\frac{M \alpha}{\sqrt{2 \pi}\sigma}
\end{pmatrix}^{n}   }{1 - \frac{M \alpha}{\sqrt{2 \pi}\sigma} }
\end{pmatrix} +  \|\mathbf{A}\|_2   \begin{pmatrix}
\frac{M \alpha}{\sqrt{2 \pi}\sigma}
\end{pmatrix}^{n}$, with $\alpha =  \|(\mathbf{A}^H\mathbf{A}+\mathbf{I})^{-1}\|_2$.
\end{proof}
                                                                                                                                                                                                                                                                                                                                                                                                                                                 


\vspace{-0.1in}
\bibliographystyle{IEEEbib}
\bibliography{reference,refs_adv}

\begin{IEEEbiography}[{\includegraphics[width=1in,height=1.25in,clip,keepaspectratio]{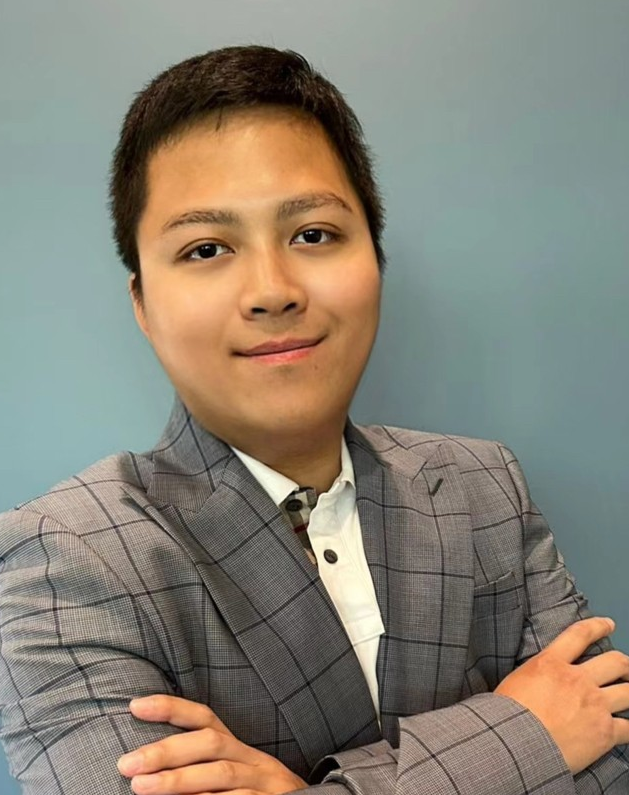}}]{Shijun Liang}(Member, IEEE)  received his B.S. degree in Biochemistry from the University of California, Davis, CA, USA, in 2017. In 2025, he received Ph.D. student in the Department of Biomedical Engineering at Michigan State University, East Lansing, MI, USA. His research focuses on machine learning and optimization techniques for solving inverse problems in imaging. Specifically, he is interested in machine learning based image reconstruction and in enhancing the robustness of learning-based reconstruction algorithms.
\end{IEEEbiography}

\begin{IEEEbiography}[{\includegraphics[width=1in,height=1.25in,clip,keepaspectratio]{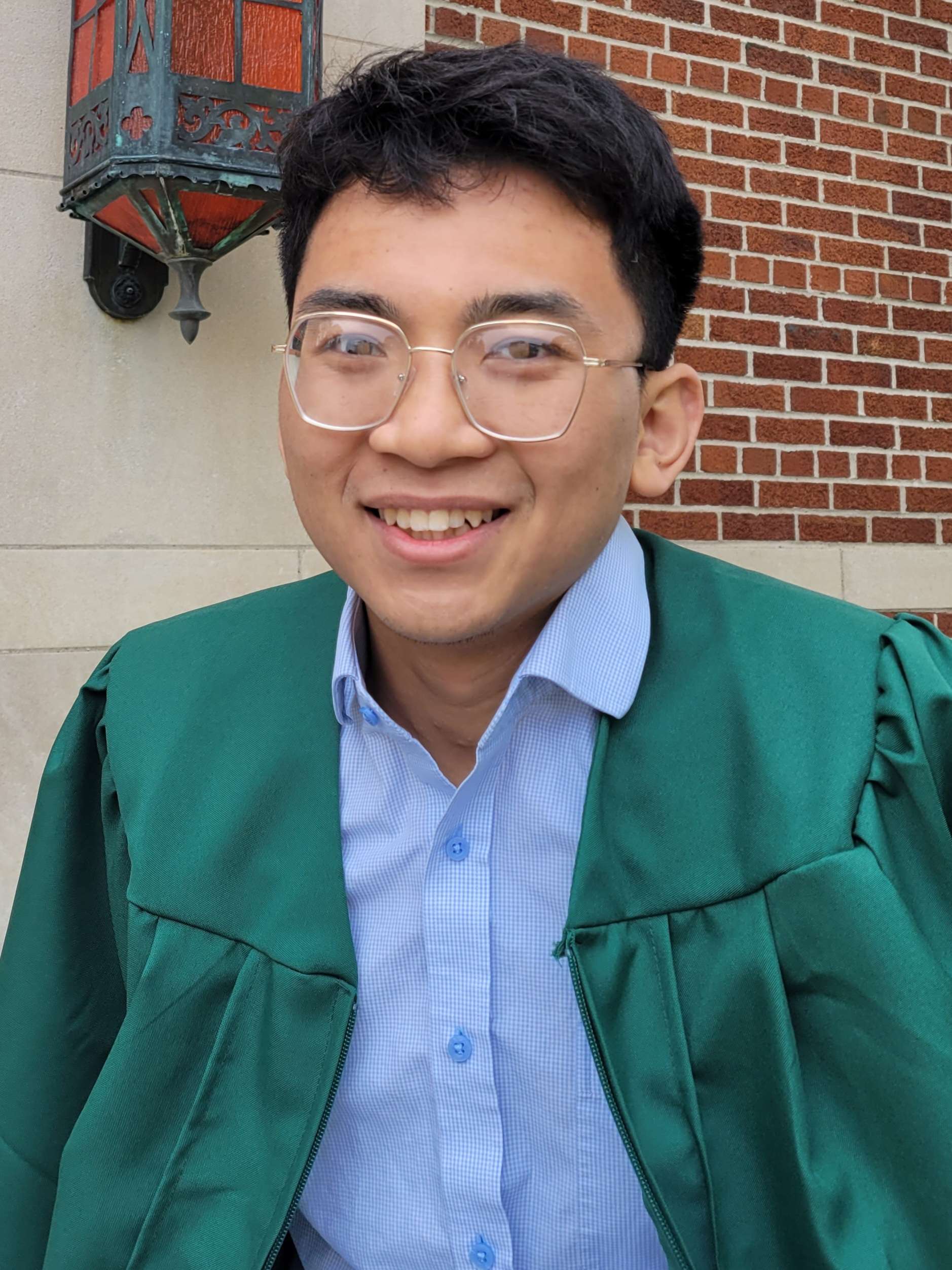}}]{Minh Van-Hoang Nguyen}
    recieved his B.A. in Mathematics from Michigan State University (MSU) in 2025. From fall 2025, he is a Ph.D. student in Applied and Computational Mathematics at the California Institute of Technology. While at MSU, he worked in machine learning, optimal transport and PDEs in the Computational Mathematics, Science (CMSE) and Engineering Department and the Mathematics Department. He is a recipient of the Outstanding Alumni Research Award at the 2025 MSU's CMSE 10th Anniversary workshop and several undergraduate awards from the Mathematics Department.
\end{IEEEbiography}

\begin{IEEEbiography}[{\includegraphics[width=1in,height=1.25in,clip,keepaspectratio]{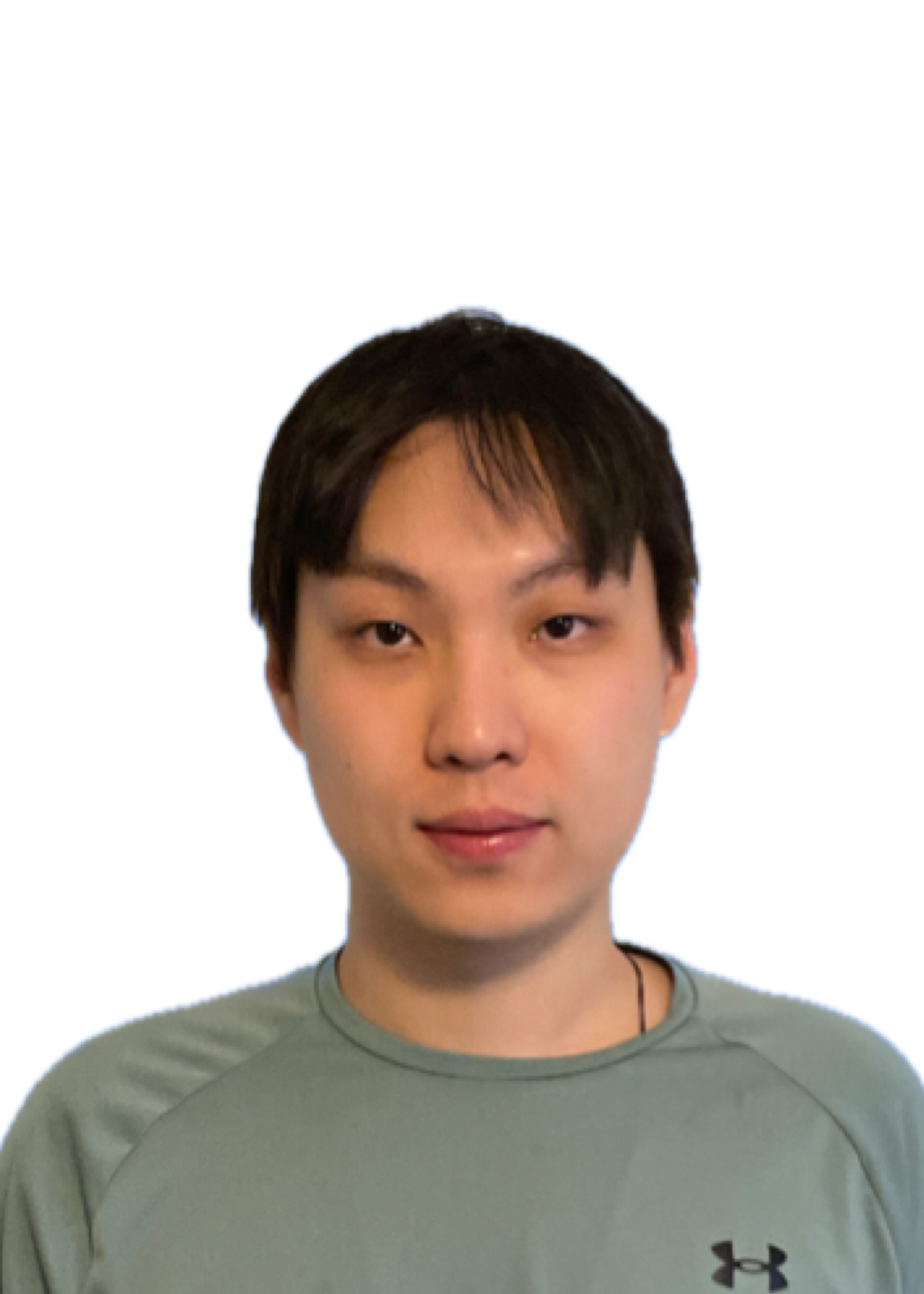}}]{Jinghan Jia}
(Student Member, IEEE) is a Ph.D. candidate in Computer Science at Michigan State University. His research focuses on trustworthy and efficient foundation models, including machine unlearning, reinforcement learning from human feedback (RLHF), and optimization for large language and diffusion models. He has co-authored over 30 papers, with 11 as first author, at top venues such as NeurIPS, ICLR, ICML, EMNLP, and CVPR.
\end{IEEEbiography}
\begin{IEEEbiography}[{\includegraphics[width=1in,height=1.25in,clip,keepaspectratio]{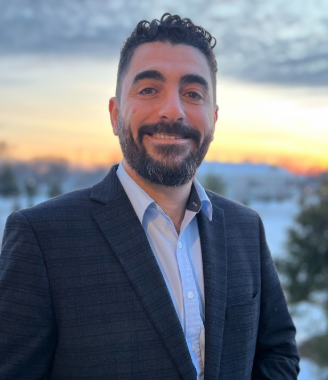}}]{Ismail R. Alkhouri}(Member, IEEE) is a Research Scientist at SPA Inc., providing technical support to the Information Innovative Office at the Defense Advanced Research Projects Agency (DARPA). He is also a Research Scholar at the University of Michigan (Electrical Engineering and Computer Science (EECS) Department) and Michigan State University (Computational Mathematics, Science, and Engineering (CMSE) Department). He received a Ph.D. in Electrical and Computer Engineering from the University of Central Florida in May 2023. From 2019 to 2022, he was a research intern at the Air Force Research Laboratory (Information directorate), and from July 2023 to December 2024, he was a Postdoctoral Researcher at Michigan State University (CMSE Department) and the University of Michigan (EECS Department). He is a recipient of the Rising Stars Award at the 2025 Conference on Parsimony and Learning (CPAL). He is also a recipient of the Outstanding Alumni Research Award at the 2025 MSU's CMSE 10th Anniversary workshop. His research focuses on computational imaging with deep generative models and differentiable methods for combinatorial optimization. His work was recognized as a finalist for best paper awards at ICASSP 2021 and MLSP 2023.
\end{IEEEbiography}

\begin{IEEEbiography}[{\includegraphics[width=1in,height=1.25in,clip,keepaspectratio]{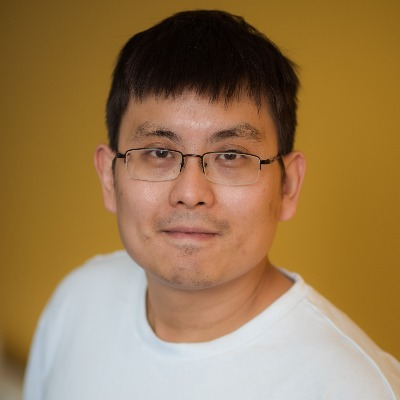}}]{Sijia Liu } (Senior Member, IEEE) 
is an Associate Professor in Computer Science and Engineering at Michigan State University, and an Affiliated Professor at IBM Research. His research focuses on scalable and trustworthy AI, spanning machine unlearning for vision and language models, scalable optimization, adversarial robustness, and data-model efficiency. He has received several honors, including the NSF CAREER Award (2024), INNS Aharon Katzir Young Investigator Award (2024), MSU Withrow Rising Scholar Award (2025), Best Paper Runner-Up Award at UAI (2022), and Best Student Paper Award at ICASSP (2017). He is a Senior Member of IEEE and serves on the IEEE Signal Processing Society’s Machine Learning for Signal Processing Technical Committee. He is also an Associate Editor for both the IEEE Transactions on Signal Processing and the IEEE Transactions on Aerospace and Electronic Systems. He is the co-founder of the New Frontiers in Adversarial Machine Learning Workshop series (ICML/NeurIPS 2021--2024) and has delivered numerous tutorials on trustworthy and scalable ML at major conferences such as ICASSP, AAAI, CVPR, and NeurIPS.
\end{IEEEbiography}

\begin{IEEEbiography}
[{\includegraphics[width=0.95\textwidth]{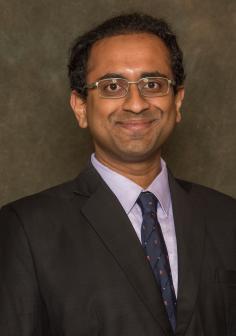}}]{Saiprasad Ravishankar} (Senior Member, IEEE) received the B.Tech. degree in Electrical Engineering
from the Indian Institute of Technology Madras,
Chennai, India, in 2008, and the M.S. and Ph.D. degrees in Electrical and Computer Engineering from the
University of Illinois at Urbana-Champaign, Urbana, IL, USA, in 2010 and 2014, respectively. He was
an Adjunct Lecturer and a Postdoctoral Research Associate with the University
of Illinois at Urbana-Champaign from February to August, 2015. Since August 2015, he was a Postdoc with the
Department of Electrical Engineering and Computer Science at the University
of Michigan, Ann Arbor, MI, USA, and then a Postdoc Research Associate with the Theoretical
Division at Los Alamos National Laboratory, Los Alamos, NM, USA, from
August 2018 to February 2019. 
He is currently an Associate Professor with
the Departments of Computational Mathematics, Science and Engineering, and Biomedical Engineering,
Michigan State University (MSU), Michigan, USA. 
His research interests include biomedical and computational imaging, machine learning, signal and image
processing, inverse
problems, neuroscience, and large-scale data processing and optimization. He was the recipient
of the IEEE Signal Processing Society Young Author Best Paper Award in 2016. A paper he co-authored won a
Best Student Paper Award at the IEEE International Symposium on Biomedical
Imaging (ISBI) 2018 and other papers were award
finalists at the IEEE International Workshop on Machine Learning for Signal
Processing (MLSP) 2017, ISBI 2020, and Optica Imaging Congress, 2023. He is currently a member of the IEEE Machine Learning for Signal Processing (MLSP) and Bio Imaging and Signal Processing (BISP) Technical Committees. He has organized several special
sessions and workshops on computational imaging and machine learning themes
including at the Institute for Mathematics and its Applications (IMA) in 2019, the Institute for Mathematical and Statistical Innovation (IMSI) in 2024, IEEE
Image, Video, and Multidimensional Signal Processing (IVMSP) Workshop
2016, MLSP 2017, ISBI 2018, the International Conference on Computer
Vision (ICCV) 2019 and 2021, etc.
\end{IEEEbiography}
\end{document}